\newcommand{\Nn}{\mathcal{N}}
\newcommand{\Pp}{\mathcal{P}}
\newcommand{\Nat}{\mathbb{N}}
\newcommand{\Rpos}{\mathbb{R}_{\ge 0}}
\newcommand{\dom}{dom}
\newcommand{\xra}[1]{\xrightarrow{#1}}
\renewcommand{\int}[1]{\lfloor #1 \rfloor}
\renewcommand{\frac}[1]{\{ #1 \}}
\newcommand{\Sync}{\ensuremath{Sync}}
\newcommand{\incl}{\subseteq}
\newcommand{\preg}{\equiv^p_M}
\newcommand{\reg}{\equiv_M}
\newcommand{\region}[1]{[ #1]}
\newcommand{\prg}{\mathsf{ProdRegAut}}
\newcommand{\PSPACE}{\textsf{PSPACE}}
\newcommand{\ta}{\mathsf{TA}}
\renewcommand{\a}{\alpha}
\newcommand{\Oo}{\mathcal{O}}
\title{A local-time semantics for negotiations}
\author{Madhavan Mukund}{Chennai Mathematical Institute, India\\ CNRS
  IRL 2000, ReLaX, Chennai,
  India}{sri@cmi.ac.in}{}{}
\author{Adwitee Roy}{Chennai Mathematical Institute,
  India}{adwitee@cmi.ac.in}{}{} 
\author{B. Srivathsan}{Chennai Mathematical Institute, India\\ CNRS
  IRL 2000, ReLaX, Chennai,
  India}{sri@cmi.ac.in}{}{}
\authorrunning{Mukund, Roy and Srivathsan}
\keywords{Real-time systems, Timed automata, Concurrency,
  Negotiations, Local-time semantics, Reachability}
\begin{document}
\tikzmath{\dx = 2;}

\maketitle

\begin{abstract}
  Negotiations, introduced by Esparza et al., are a model for
  concurrent systems where computations involving a set of agents
  are described in terms of their interactions.
  In many situations, it is natural to impose timing constraints
  between interactions --- for instance, to limit the time
  available to enter the PIN after inserting a card into an ATM.
  To model this, we introduce a real-time aspect to negotiations.
  In our model of \emph{local-timed negotiations}, agents have
  local reference times that evolve independently. Inspired by
  the model of networks of timed automata, each agent is equipped
  with a set of local clocks.  Similar to timed automata, the
  outcomes of a negotiation contain guards and resets over the
  local clocks.

  As a new feature, we allow some interactions to force the
  reference clocks of the participating agents to synchronize.
  This synchronization constraint allows us to model interesting
  scenarios. Surprisingly, it also gives unlimited computing
  power. We show that reachability is undecidable for local-timed
  negotiations with a mixture of synchronized and unsynchronized
  interactions. We study restrictions on the use of synchronized
  interactions that make the problem decidable.
\end{abstract}

\section{Introduction}

Computing systems often consist of multiple components that interact with each other to execute a task. For instance, ATMs, online banking platforms, and e-commerce retailers all maintain a coordinated conversation between customers at the front end and data servers at the back end. In many cases, these interactions need to meet timing constraints---for example, a one-time password (OTP) times out if it is not entered with a short window. Hence, when specifying such interactions, it becomes important to accurately describe the interplay between concurrency and timing.

In \cite{DBLP:journals/acta/DeselEH19,DBLP:conf/concur/EsparzaD13}, Esparza et al. introduced \textit{negotiations} as a model for describing computations involving a set of agents.  Conventional automata-theoretic models focus on states and transitions, and specify how local states of agents determine global behaviours.  In negotiations, the basic building blocks are the interactions between the agents. Individual interactions between a set of agents are called \emph{atomic negotiations}. After each atomic negotiation, the participating agents collectively agree on an outcome and move on to participate in other atomic negotiations. Apart from providing an attractive alternative perspective for modelling concurrent systems, negotiations also admit efficient analysis procedures. For some subclasses, interesting properties can be analyzed in polynomial-time~\cite{DBLP:journals/lmcs/EsparzaKMW18,DBLP:conf/lics/EsparzaMW17}.  

The basic negotiation model does not have any mechanism to incorporate timing constraints between interactions. In~\cite{DBLP:conf/fossacs/AkshayGHM20} a notion of timed negotiations has been proposed, where every outcome is associated with an interval representing the minimum and maximum amount of time required for the interaction to conclude. The work focuses on computing the minimum and maximum execution times for the overall negotiation to complete. This model cannot express constraints on the time between different atomic negotiations. For this, we introduce clocks, as defined in timed automata~\cite{DBLP:journals/tcs/AlurD94}. 

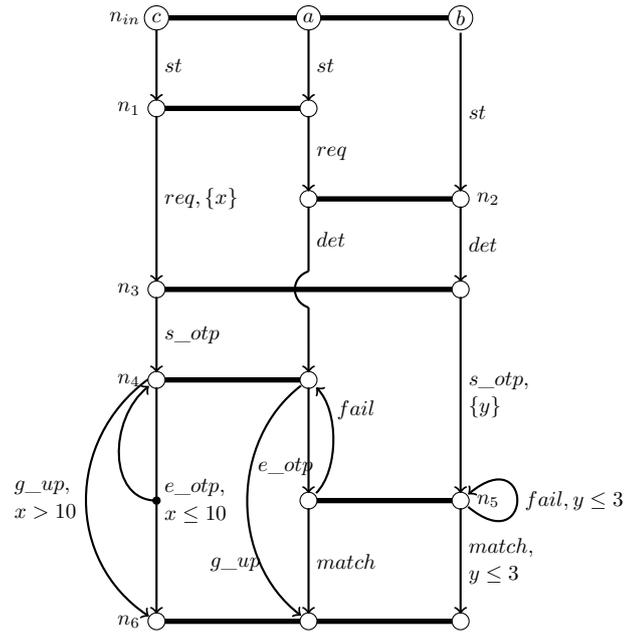
\begin{figure}[t]
\centering
\begin{tikzpicture}[scale=0.8, every node/.style={transform shape}]

\begin{scope}[line width=2pt]
\draw (0.,0) node [left] {$n_{in}\,\,$} -- (5,0);
\draw (0.,-1.5) node [left] {$n_{1}\,\,$} -- (2.5,-1.5);
\draw (2.5,-3) -- (5,-3) node [right] {$\,\,n_{2}$};
\draw (0.,-4.5) node [left] {$n_{3}\,\,$} -- (5,-4.5);
\draw (0.,-6) node [left] {$n_{4}\,\,$} -- (2.5,-6);
\draw (2.5,-8) -- (5,-8)  node [right] {\,\,$n_{5}$};
\draw (0.,-10) node [left] {$n_{6}\,\,$} -- (5,-10);
\end{scope}

\begin{scope}[fill=white]
\filldraw  (0,0)  circle (2mm) node (nic) {$c$};
\filldraw  (2.5,0)  circle (2mm) node (nia) {$a$};
\filldraw  (5,0)  circle (2mm) node (nib) {$b$};

\filldraw  (0,-1.5)  circle (1.4mm) node (n1c) {};
\filldraw  (2.5,-1.5)  circle (1.4mm) node (n1a) {};

\filldraw  (2.5,-3)  circle (1.4mm) node (n2a) {};
\filldraw  (5,-3)  circle (1.4mm) node (n2b) {};

\filldraw  (0,-4.5)  circle (1.4mm) node (n3c) {};
\filldraw  (5,-4.5)  circle (1.4mm) node (n3b) {};

\filldraw  (0,-6)  circle (1.4mm) node (n4c) {};
\filldraw  (2.5,-6)  circle (1.4mm) node (n4a) {};

\filldraw  (2.5,-8)  circle (1.4mm) node (n5a) {};
\filldraw  (5,-8)  circle (1.4mm) node (n5b) {};

\filldraw  (0,-10)  circle (1.4mm) node (n6c) {};
\filldraw  (2.5,-10)  circle (1.4mm) node (n6a) {};
\filldraw  (5,-10)  circle (1.4mm) node (n6b) {};

\filldraw  [fill=black] (0,-8)  circle (0.6mm) node (v1) {};

\end{scope}

\begin{scope}[thick,->]

\draw (nic) -- (n1c) node [right, midway] {$st$};
\draw (n1c) -- (n3c) node [right, midway] {$req, \{x\}$};
\draw (n3c) -- (n4c) node [right, midway] {$s\_otp$};
\draw (n4c) -- (n6c) node [right, text width = 1.2 cm, midway] {$e\_otp,$ $x\leq10$};
\draw (0,-8) .. controls  (-0.8,-8) and (-0.8,-6.65) .. (n4c) node [left, midway] {};

\draw (nia) -- (n1a) node [right, midway] {$st$};
\draw (n1a) -- (n2a) node [right, midway] {$req$};
\draw (2.5,-4.8) -- (n4a);
\draw (n4a) -- (n5a) node [left=-0.3cm, text width = 1 cm, near end] {$e\_otp$};
\draw (n5a) -- (n6a) node [right, text width = 1 cm, midway] {$match$};
\draw (n5a) .. controls (3,-7.5) and (3,-6.5) .. (n4a) node [right, near end] {$fail$};
\draw (n4a) .. controls  (1.2,-7) and (1.2,-9) .. (n6a) node [left=-0.3cm, text width = 1 cm, near end] {$g\_up$};

\draw (nib) -- (n2b) node [right, midway] {$st$};
\draw (n2b) -- (n3b) node [right, midway] {$det$};
\draw (n3b) -- (n5b) node [right, text width = 1 cm, midway] {$s\_otp,$ $\{y\}$};
\draw (n5b) -- (n6b) node [right, text width = 1 cm, midway] {$match,$ $y\leq3$};
\draw (n5b) .. controls  (6.2,-9) and (6.2,-7) .. (n5b) node [right, midway] {$fail, y\leq3$};

\draw (n4c) + (-0.14,0.01) .. controls  (-1.5,-7) and (-1.5,-9) .. (n6c) node [left=-0.2cm, text width = 1.25 cm, midway] {$g\_up,$ $x > 10$};

\end{scope}

\draw [thick] (n2a) -- (2.5,-4.2) node [right, midway] {$det$};
\draw [thick] (2.5,-4.2) .. controls (2.2,-4.3) and (2.2,-4.7) .. (2.5,-4.8);
\end{tikzpicture}
\caption{A local-timed negotiation modeling a transaction between a customer, an ATM and a bank}
\label{fig:ATM}
\end{figure}

\paragraph*{A motivating example.} We use the example in Figure~\ref{fig:ATM} to introduce our model informally.  Consider a time-constrained transaction in an ATM machine ($a$), where a customer ($c$) wants to reset her ATM PIN via an OTP received from her bank ($b$).  Here, $a$, $c$, and $b$ are the \emph{agents} in the system. Their direct interactions are represented by thick horizontal lines, called nodes. After each interaction, the participating agents decide on an outcome, represented by arrows. Initially, all agents are in the node $n_{in}$ node. They choose the outcome $st$ to start the transaction. Agents $a$ and $c$ go to node $n_1$ and $b$ goes to $n_2$.The customer gives her card details and requests for a PIN change in the ATM at  node $n_1$ by choosing the outcome $req$. At $n_2$, the ATM conveys this request to the bank and sends the details through the outcome $det$. The bank and the customer talk to each other at $n_3$, and $b$ sends an OTP to the customer through the outcome $s\_otp$. At $n_4$, the customer enters the OTP in the ATM. After entering the OTP, shown by the outcome $e\_otp$ the customer is ready to engage with the ATM either at $n_4$ or at $n_6$, represented by the non-deterministic arc leading to $n_4$ and $n_6$.
The ATM talks to the bank at $n_5$ to check the OTP. If it matches, the ATM goes to $n_6$, otherwise it goes back to $n_4$. 

In this example, we model two time constraints. The bank would like to ensure that at most $3$ time units elapse between the sending of the OTP and the final match. This is achieved by resetting a local clock $y$ of the bank and checking that $y \le 3$ at the outcome \emph{match}.  On the other hand, the customer wants at most $10$ time units to have elapsed between initiating the request and completing the transaction. This is achieved by resetting a local clock $x$ at the outcome $req$ and checking that $x \le 10$ at the outcome $e\_otp$. If more than $10$ units elapse, the customer gives up, fires the outcome $g\_up$.  It is natural to imagine that clocks $x$ and $y$ are local to the customer and to the bank and that they may evolve at different rates.  We formalize this behaviour in terms of our local-time semantics. However, as will see later, in certain interactions, it is useful to force the agents to synchronize their local times. Combining the concurrency present in negotiations with timing constraints and a mechanism to synchronize local times makes the model surprisingly powerful.

\paragraph*{Structure of the paper.} The paper is organized as
follows.  We begin by formalizing local-timed negotiations
(Section~\ref{sec:local-timed-negot}), with some illustrative
examples. We study the reachability problem for this model. We show
that when the negotiation has no interactions that synchronize local
times, or when all interactions force a synchronization of local
times, reachability is \PSPACE-complete
(Sections~\ref{sec:synchr-free-negot}
and~\ref{sec:always-synchr-negot}). In the general case when, there is
a mix of synchronized and unsynchronized interactions, reachability is
undecidable (Section~\ref{sec:undecidable}).

\paragraph*{Related work.}

A local-time semantics was proposed in the context of networks of timed automata in~\cite{DBLP:conf/concur/BengtssonJLY98}. Recently, the semantics has been applied to the zone-based verification of reachability~\cite{DBLP:conf/concur/GovindHSW19,DBLP:conf/lics/0001HSW22} and B\"uchi reachability properties~\cite{DBLP:conf/concur/HerbreteauSW22}. Local-time semantics in timed automata has been investigated as a basis for applying partial-order methods.
In our current work, we consider local-time semantics as the starting point and make synchronization an option that can be specified explicitly. This allows more independence between the agents and keeps the concurrency between actions of disjoint sets of agents. 

Models mixing concurrency and timing have been widely studied: time Petri nets~\cite{10.5555/907383}, timed-arc Petri nets~\cite{DBLP:conf/apn/Hanisch93}, time-constrained message sequence charts~\cite{DBLP:conf/concur/AkshayMK07} to name a few. Each model offers a different view of the computation. To our knowledge, a notion of a real-time has not yet been considered for negotiations. 

\section{Local-timed negotiations}
\label{sec:local-timed-negot}

For a finite set $S$ we write $\Pp(S)$ for the power set containing
all subsets of $S$. Let $\Rpos$ denote the set of non-negative reals,
and $\Nat$ the set of natural numbers. A \emph{clock} is a real-valued
variable whose values increase along with time and get updated during
transitions (exact semantics comes later). Let $X$ be a set of
clocks. A \emph{guard} over $X$ is a conjunction of clock constraints
of the form $x \bowtie c$ where $x \in X$,
$\bowtie \in \{ <, \le, =, >, \ge \}$ and $c \in \Nat$. We write
$\Phi(X)$ for the set of guards over $X$.

\begin{definition}[Local-timed
  negotations]\label{def:local-timed-negot}
  Let $P$ be a finite set of agents, $\Sigma$ a finite set of outcomes
  and $X$ a finite set of clocks. We assume that $X$ is partitioned as
  $\{ X_p \}_{p \in P}$ 
  with $X_p$ being the \emph{local clocks} for agent $p$. Further, we associate a
  special clock $t_p$ to each agent $p$, called its \emph{reference 
    clock}. This clock is neither used in a guard nor reset. A
  \emph{local-timed negotiation} $\Nn$ is given by a tuple
  $(N, \dom, \delta, Sync)$ where
  \begin{itemize}
  \item $N$ is a finite set of nodes (also called atomic
    negotiations); there is a special initial node $n_{in} \in N$,
  \item $\dom: N \to \Pp(P)$ maps each node to a non-empty subset of
    agents; we assume $\dom(n_{in}) = P$; for $p \in P$, we let
    $N_p := \{ n \in N \mid p \in \dom(n) \}$
  \item $\delta = \{\delta_p\}_{p \in P}$ is a tuple of transition
    relations, one for each agent, where
    $\delta_p: N_p \times \Sigma \to \Phi(X) \times \Pp(N_p) \times
    \Pp(X_p)$ maps each node-outcome pair $(n, a)$ to a guard
    $g \in \Phi(X)$, a set of nodes $M \incl N_p$ that $p$ becomes
    ready to engage in after this outcome, and a set $Y \incl X_p$ of
    clocks that get reset; we will call node-outcome pairs $(n, a)$ as
    \emph{locations},
  \item $\Sync \incl N$ is a subset of \emph{synchronizing nodes}.
  \end{itemize}
\end{definition}

Figure~\ref{fig:ATM} gives an example of a local-timed negotiation
over agents $P = \{c, a, b\}$. The set of nodes is given by 
$N = \{n_{in}, n_1, \dots, n_6\}$.  The domain $\dom(n)$ of a node $n$
is represented by the ``circles'' in each node: for instance,
$\dom(n_1) = \{c, a\}$ and $\dom(n_2) = \{a, b\}$. Agent $c$ has a
local clock $x$, and agent $b$ has a local clock $y$. As an example of
a transition for agent $c$, we have $\delta_c(n_4, e\_otp) = (x \le
10, \{n_4, n_6\}, \{\})$. There is a guard $x \le 10$, the agent
is ready to engage in $n_4$ and $n_6$ after the outcome, and no clock
is reset. In this example, $\Sync$ is empty. 

\paragraph*{Semantics.} The semantics of a negotiation is described using \emph{markings} and
\emph{valuations}.  A marking $C$ is a function assigning each agent
$p$ to a subset of $N_p$. It gives the set of nodes that each agent is
ready to engage in. A valuation $v: X \cup T \to \Rpos$ maps every
clock (including reference clocks) to a non-negative real such that
$v(x) \le v(t_p)$ for all $x \in X_p$, and all agents $p \in P$. The
interpretation is that clocks in $X_p$ move at the same pace as $t_p$,
the local reference clock. Since $t_p$ is never reset it gives the
local time at agent $p$. For a constraint $x \bowtie c$ we say
$v \models x \bowtie c$ if $v(x) \bowtie c$. We say $v$ satisfies
guard $g \in \Phi(X)$, written as $v \models g$, if $v$ satisfies
every atomic constraint appearing in $g$.

A \emph{local-delay} $\Delta \in \Rpos^{|P|}$ is a vector of
non-negative reals, giving a time elapse for each agent. Each agent
can have a different time elapse.  Given a valuation $v$ and a
local-delay $\Delta$, we write $v + \Delta$ for the valuation obtained
as: for each agent $p \in P$, we have
$(v + \Delta)(y) = v(y) + \Delta(p)$ for every $y \in t_p \cup
X_p$. Notice that all clocks within a process move at the same rate as
its reference clock. However, the reference clocks of different agents
can move at different speeds.
For a set of clocks $Y \incl X$, we denote by $v[Y]$ the valuation
satisfying $v[Y](y) = 0$ if $y \in Y$ and $v[Y](y) = v(y)$ otherwise. 

A configuration is a pair $(C, v)$ consisting of a marking $C$ and a
valuation $v$. The \emph{initial configuration} $(C_0, v_0)$ contains
a marking $C_0$ which maps every agent to $n_{in}$ and valuation $v_0$
maps all clocks to $0$. We write $(C, v) \xra{\Delta} (C, v+\Delta)$
for the local-delay transition $\Delta$ at configuration $(C, v)$.
For the negotiation in Figure~\ref{fig:ATM}, an example of a
configuration is $(\bar{C}, \bar{v})$ with $\bar{C}(c) = \{ n_1 \}$,
$\bar{C}(a) = \{n_1\}$ and $\bar{C}(b) = \{n_2\}$ and $\bar{v}(t_c) =
\bar{v}(x) = 2$, $\bar{v}(t_a) = 1$ and $\bar{v}(t_b) = \bar{v}(y) =
3$. Suppose $\Delta = (1, 0, 2)$, then $\bar{v} + \Delta$ maps $t_c$
and $x$ to $3$, and $t_b$ and $y$ to $5$ whereas $t_a$ remains $1$.

 A location $\ell = (n, a)$ can be executed at a configuration $(C, v)$
leading to a configuration $(C', v')$, written as
$(C, v) \xra{\ell} (C', v')$, provided there is an entry
$\delta_p(n, a) = (g_p, M_p, Y_p)$ for all $p \in \dom(n)$ such that:
\begin{itemize}
\item \emph{current marking enables the negotiation:} $n \in C(p)$ for
  all $p \in \dom(n)$,
\item \emph{synchronization condition is met:} if $n \in Sync$, then
  $v(t_p) = v(t_q)$ for all $p, q \in \dom(n)$,
\item \emph{guard is satisfied}: $v \models g_p$ for all
  $p \in \dom(n)$,
\item \emph{target marking is correct}: $C'(p) = M_p$ for all
  $p \in \dom(n)$, $C'(p) = C(p)$ for $p \notin \dom(n)$,
\item \emph{resets are performed}: $v'(y) = 0$ for
  $y \in \bigcup_{p \in \dom(n)} Y_p$
\end{itemize}

For an example, consider Figure~\ref{fig:ATM} again and a
configuration $(C^1, v^1)$ with $C^1:= (\{n_4, n_6\}, \{n_4\},
\{n_6\})$ and $v^1: \langle t_c = 10, x = 5, t_a = 20, t_b = 5, y = 1
\rangle$. Location $(n_4, e\_otp)$ is enabled leading to a
configuration $(C^2, v^2)$ where $C^2 := ( \{n_4, n_6\}, \{n_6\},
\{n_6\})$ and $v^2 = v^1$, as there are no resets in this location.  

We call $(C, v) \xra{\Delta} (C, v + \Delta) \xra{\ell} (C', v')$ a
\emph{small step} and write this as
$(C, v) \xra{\Delta, \ell} (C', v')$ for conciseness. A \emph{run} is
a sequence of small steps starting from the initial configuration. We
say that a location $\ell = (n,a)$ is \emph{reachable} if there is a
run containing a small step that executes $\ell$.

\emph{Reachability problem.} We are interested in the following
question: given a location $\ell = (n,a)$, is it reachable?

\subsection{Some examples}

In the example of Figure~\ref{fig:ATM}, we have seen how
local-clocks can be used to constrain interactions. We will now see
some examples that show some interesting mechanics of synchronized
interactions. The negotiation in the left of Figure~\ref{fig:k-vendor} has three agents $p, q,
v$. The outcome at node $n_1$ results in a non-deterministic choice
for agent $q$: the agent may either decide to talk with $p$ at $n_3$
or with $v$ at $n_2$. Suppose at $n_3$, agent $p$ wants to make sure
that $q$ has arrived at $n_3$ after talking to $v$. We can imagine
that $v$ is a vendor, and $p$ wants to ensure that $q$ has indeed
met the vendor between their meetings at $n_1$ and $n_3$. To do this,
we make use of timing and synchronization constraints as
follows.

\begin{figure}[t]
\centering
\begin{tikzpicture}[scale=0.65, every node/.style={transform shape}]

\begin{scope}[shift={(-6,0)}]
\begin{scope}[line width=2pt]
\draw (0.,0) node [left] {$n_{in}\,\,$} -- (4.5,0);
\draw [densely dotted, opacity=0.4] (0,-1.5) node [black, left, opacity=1] {$n_1\,\,$} -- (1.5,-1.5);
\draw (3,-4.5)  -- (4.5,-4.5) node [right] {\,\,$n_2$};
\draw [densely dotted, opacity=0.4] (0,-6) node [black, left, opacity=1] {$n_3\,\,$} -- (1.5,-6);
\draw (0,-7.5) node [left] {$n_{4}\,\,$} -- (1.5,-7.5);
\end{scope}

\filldraw [fill=white]  (0,0)  circle (2mm) node 		(nip) {$p$};
\filldraw [fill=white]  (1.5,0)  circle (2mm) node 		(niq) {$q$};
\filldraw [fill=white]  (4.5,0)  circle (2mm) node 		(niv) {$v$};

\filldraw [fill=white]  (0,-1.5)  circle (1.4mm) node 	(n1p) {};
\filldraw [fill=white]  (1.5,-1.5)  circle (1.4mm) node (n1q) {};

\filldraw [fill=black]  (1.5,-3) circle (.6mm) node		(virt) {};

\filldraw [fill=white]  (3,-4.5)  circle (1.4mm) node 	(n21) {};
\filldraw [fill=white]  (4.5,-4.5)  circle (1.4mm) node (n2v) {};

\filldraw [fill=white]  (0,-6)  circle (1.4mm) node 	(n3p) {};
\filldraw [fill=white]  (1.5,-6)  circle (1.4mm) node 	(n3q) {};

\filldraw [fill=white]  (0,-7.5)  circle (1.4mm) node 	(nfp) {};
\filldraw [fill=white]  (1.5,-7.5)  circle (1.4mm) node (nfq) {};

\begin{scope}[thick,->]

\draw (nip) -- (n1p);
\draw (n1p) -- (n3p) node [midway, left, text width = 1cm] {$x=2$};
\draw (n3p) -- (nfp) node [midway, left] {};

\draw (niq) -- (n1q);
\draw (n1q) -- (n3q) node [very near start, right] {$y=0$};
\draw (virt) + (-.05,.05) -- (n21);
\draw (n21) -- (n3q) node [midway, above] {$\{y\}$};
\draw (n3q) -- (nfq) node [midway, right, text width = 1cm] {$y=0$};

\draw (niv) -- (n2v);

\end{scope}
\end{scope}

\begin{scope}[line width=2pt]
\draw (0.,0) node [left] {$n_{in}\,\,$} -- (7.9,0);
\draw (8.8,0) node [left] {} -- (10.5,0);
\draw [densely dotted, opacity=0.4] (0,-1.5) node [black, left, opacity=1] {$n_1\,\,$} -- (1.5,-1.5);
\draw (3,-4.5)  -- (4.5,-4.5) node [right] {\,\,$m_1$};
\draw (5.5,-4.5)  -- (7,-4.5) node [right] {\,\,$m_2$};
\draw (9,-4.5)  -- (10.5,-4.5) node [right] {\,\,$m_k$};
\draw [densely dotted, opacity=0.4] (0,-6) node [black, left, opacity=1] {$n_3\,\,$} -- (1.5,-6);
\draw (0,-7.5) node [left] {$n_{4}\,\,$} -- (1.5,-7.5);
\end{scope}

\filldraw [fill=white]  (0,0)  circle (2mm) node 		(nip) {$p$};
\filldraw [fill=white]  (1.5,0)  circle (2mm) node 		(niq) {$q$};
\filldraw [fill=white]  (4.5,0)  circle (2mm) node 		(niv) {$v_1$};
\filldraw [fill=white]  (7,0)  circle (2mm) node 		(niv2) {$v_2$};
\filldraw [fill=white]  (10.5,0)  circle (2mm) node 		(nivk) {$v_k$};

\filldraw [fill=white]  (0,-1.5)  circle (1.4mm) node 	(n1p) {};
\filldraw [fill=white]  (1.5,-1.5)  circle (1.4mm) node (n1q) {};

\filldraw [fill=black]  (1.5,-2.5) circle (.6mm) node		(virt) {};

\filldraw [fill=white]  (3,-4.5)  circle (1.4mm) node 	(n21) {};
\filldraw [fill=white]  (4.5,-4.5)  circle (1.4mm) node (n2v) {};

\filldraw [fill=white]  (0,-6)  circle (1.4mm) node 	(n3p) {};
\filldraw [fill=white]  (1.5,-6)  circle (1.4mm) node 	(n3q) {};

\filldraw [fill=white]  (0,-7.5)  circle (1.4mm) node 	(nfp) {};
\filldraw [fill=white]  (1.5,-7.5)  circle (1.4mm) node (nfq) {};

\filldraw [fill=white]  (5.5,-4.5)  circle (1.4mm) node 	(v2q) {};
\filldraw [fill=white]  (7,-4.5)  circle (1.4mm) node (v2v) {};

\filldraw [white, fill=white]  (7.5,-4.8)  circle (1.4mm) node (virt2) {};
\filldraw [white, fill=white]  (8.5,-4.8)  circle (1.4mm) node (virt3) {};
\filldraw [white, fill=white]  (8.5,-5.5)  circle (1.4mm) node (virt8) {};

\filldraw [fill=black]  (3.7,-5.22)  circle (.6mm) node (virt4) {};
\filldraw [fill=black]  (5.94,-5.22)  circle (.6mm) node (virt5) {};

\filldraw [white, fill=white]  (7.5,-5.1)  circle (1.4mm) node (virt6) {};
\filldraw [white, fill=white]  (7.5,-5.5)  circle (1.4mm) node (virt7) {};

\filldraw [fill=white]  (9,-4.5)  circle (1.4mm) node 	(vkq) {};
\filldraw [fill=white]  (10.5,-4.5)  circle (1.4mm) node (vkv) {};

\begin{scope}[thick,->]

\draw (nip) -- (n1p);
\draw (n1p) -- (n3p) node [very near start, left] {$a$};
\draw (n3p) -- (nfp) node [near start, right, text width = 1cm] {$b$};

\draw (niq) -- (n1q);
\draw (n1q) -- (n3q) node [very near start, right] {$a$};
\draw (virt) + (-.05,.05) -- (n21);
\draw (n3q) -- (nfq) node [near start, right, text width = 1cm] {$b$};

\draw (niv) -- (n2v);

\draw (niv2) -- (v2v);
\draw (nivk) -- (vkv);

\draw (1.5,-2.5) -- (v2q);
\draw (1.5,-2.5) -- (vkq);

\draw (n21) -- (n3q) node [near start, left = 0cm] {$a_1$};

\end{scope}

\draw [line width=2pt, loosely dotted] (7.9,0)  -- (8.8,0) {};
\draw [very thick, dotted] (7.9,-4.5)  -- (8.4,-4.5) {};
\draw [very thick, dotted] (8.7,-5)  -- (8.7,-5.3) {};
\draw [very thick, dotted] (5.3,-5.9)  -- (5.3,-6.3) {};

\draw [thick,->] (n21) .. controls (3.5,-5.5) and(4.5,-5.5) .. (v2q) node [very near start, right = 0.1cm] {$b_1$};
\draw [thick,->] (v2q) .. controls (5.8,-5.5) and(6.5,-5.5) .. (virt2) node [very near start, right = 0cm] {$b_2$};
\draw [thick,->] (virt3) .. controls (8.7,-4.9) and (8.9,-4.8) .. (vkq);
\draw [thick,->] (virt8) .. controls (8.9,-5.4) and (9.1,-4.8) .. (9.1,-4.6);
\draw [thick,->] (v2q) .. controls (5,-6) and (3,-6) .. (n3q) node [midway, above left= -0.1cm] {$a_2$};
\draw [thick,->] (3.7,-5.22) .. controls (5,-6) and (6.5,-6) .. (virt6);
\draw [thick,->] (5.94,-5.22) .. controls (6,-6) and (6.5,-6) .. (virt7);
\draw [thick,->] (9.1,-4.6) .. controls (9,-7) and (5,-6.5) .. (n3q) node [midway, above left = -0.1cm] {$a_k$};;

\end{tikzpicture}
\caption{In the figure on the right,
          the $a$ transition has guards $x = m$ for $p$ and $y = 0$
          for $q$. Each $b_i$ transition has a guard $y=1$ and a reset
          of $y$ to ensure exactly $1$ unit of time is spent in the
          nodes $m_i$ before outcomes $b_i$. The outcomes $a_i$ have a
          reset of $y$. The $b$ transition from
          $n_3$ has a guard $y=0$. }   
	\label{fig:k-vendor}
\end{figure}

We first make $n_1$ and $n_3$ as synchronization nodes, that is, they
are part of $\Sync$ for this negotiation. In the picture we represent
it as the coloured nodes. Suppose $x$ is a clock of agent $p$ and $y$ is a clock of
agent $q$. At
$n_1$, the outcome checks for the guard $x = 2$ and $y = 0$. When this
outcome is fired, the local-clock $t_p$ is ahead of $t_q$ by $2$
units. Now, we make $n_3$ a synchronizing node and add a guard $y = 0$
in the outcome of $n_3$. If $q$ comes to $n_3$ directly after talking
to $p$ at $n_1$, then we have $t_q = y = 0$, but $t_p = 2$. No time
can elapse at $q$ since there is a $y = 0$ guard. But then, the
synchronization condition cannot be satisfied. This forces $q$ to meet
$v$ at $n_2$, spend sufficient time ($2$ units, in this case), reset
the clock $y$ and then interact with $p$ at $n_3$.

This example can be extended in the case where there are multiple
vendors $v_1, \dots v_k$ and $p$ wants $q$ to have met at least $m$
vendors out of them before resynchronizing, as shown in Figure
\ref{fig:k-vendor} on the right. We also assume that once
$q$ interacts with $v_i$, she cannot interact with any vendor $v_j$
with $j \le i$. If each interaction of $q$ with a vendor $v_i$ takes
$1$ time unit, we can force the clock of $p$ to be at least $m$ at
node $n_1$. Therefore at $n_3$, the only way for $q$ to ensure
synchronization with $p$, and have $y = 0$ is by finding $m$ other
interactions where she can spend time.

In Figure~\ref{fig:abc} we present an example which has been used in
different contexts dealing with a partial-order semantics for timed
automata~\cite{DBLP:journals/tcs/LugiezNZ05} or the local-time semantics for networks of timed
automata~\cite{DBLP:conf/lics/0001HSW22}.  
Outcomes $a$ and $b$ are local to agents $p$ and $q$ whereas
$c$ is the result of a negotiation. 
We make node $n_3$ a synchronizing
node and ask for the guard $x = 1$ and $y = 1$ at $c$. If $t_p = n, x
= 1$ for agent $p$, we want $t_q = n, y = 1$ at agent $q$. This
constraint forces the same number of $a$s and $b$s to have happened
locally before $p$ and $q$ interact at $n_3$. There is no ordering
relation between the $a$s and $b$s, for instance we cannot say 
that the
second $a$ happens after the first $b$. Therefore the untimed language
is simply the language of all words with the same number of $a$s and
$b$s before the $c$. This shows that the
untimed language of the outcome sequences need not even be regular,
unlike the case of timed automata. 

\begin{figure}[t]
\centering
\begin{tikzpicture}[scale=0.9, every node/.style={transform shape}]

\begin{scope}[line width=2pt]
\draw (-.25,0) -- (0.25,0);
\draw (2.25,0) -- (2.75,0);
\draw [opacity=0.4, densely dotted] (-.25,-2) -- (2.75,-2);
\end{scope}

\filldraw [fill=white]  (0,0)  circle (1.5mm) node (pa) {$p$};
\filldraw [fill=white]  (2.5,0)  circle (1.5mm) node (qb) {$q$};
\filldraw [fill=white]  (0,-2)  circle (1.5mm) node (pc) {};
\filldraw [fill=white]  (2.5,-2)  circle (1.5mm) node (qc) {};

\node at (0.45,0) {$n_1$};
\node at (2.05,0) {$n_2$};
\node at (2.95,-2) {$n_3$};
\node at (-1.3,0.6) [text width=1cm] {$a,$};
\node at (3.9,0.6) [text width=1cm] {$b,$};
\node at (-1.3,0) [text width=1cm] {$x=1,$ $\{x\}$};
\node at (3.9,0) [text width=1cm] {$y=1,$ $\{y\}$};

\begin{scope}[thick,->]
\draw (pa) + (0,-.14) -- (pc) node [midway, left] {$a,$} node [midway, right, text width=1cm] {$x=1,$ $\{x\}$};
\draw (qb) + (0,-.14) -- (qc) node [midway, left] {$b,$} node [midway, right, text width = 1cm] {$y=1,$ $\{y\}$};
\draw (pc) -- (0,-3.5) node [midway, left] {$c,$} node [midway, right, text width=1cm] {$x=1$};
\draw (qc) -- (2.5,-3.5) node [midway, left] {$c,$} node [midway, right, text width=1cm] {$y=1$};
\draw (-0.1,-0.1) .. controls (-1,-1) and (-1,1) .. (-0.1,0.1);
\draw (2.6,-0.1) .. controls (3.5,-1) and (3.5,1) .. (2.6,0.1);
\end{scope}

\end{tikzpicture}
\caption{A local-timed negotiation depicting that the untimed
          language of the outcome sequences need not be regular. The
          synchronizing node $n_3$ forces the number of $a$s and number of $b$s to be equal.}

	\label{fig:abc}
\end{figure}
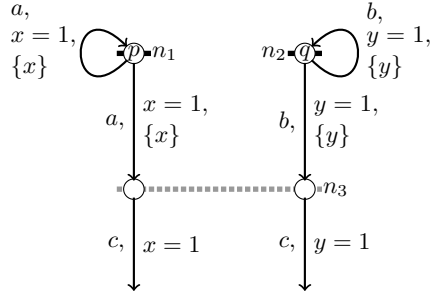

\section{Synchronization-free negotiations}
\label{sec:synchr-free-negot}

Our goal is to study the location reachability problem in local-timed
negotiations. Before studying the general case, we look at some
restricted versions. The first restriction we look at is a
synchronization-free fragment. Fix a negotiation
$\Nn = (N, \dom, \delta, \Sync)$ for this section.

We say that $\Nn$ is \emph{synchronization-free} if
$\Sync = \emptyset$. In such a negotiation, the agents require to
elapse sufficient time only to satisfy their local guards (and not to
meet any synchronization criteria). For instance, in the example on
the left of Figure~\ref{fig:k-vendor}, if $n_3$ is not a synchronizing
node, then $q$ can come directly to $n_3$ after node $n_1$, elapse no
time at all, and engage in the only outcome from $n_3$. In the
negotiation of Figure~\ref{fig:abc}, if the outcome $c$ is not
synchronized, the untimed language is the set of $wc$ where
$w \in (a + b)^*$ contains at least one $a$ and one $b$.

Although the time elapse needed for an agent is to satisfy her own
guards, she may need to collaborate with a partner to decide on what
amount to elapse. This is because of the combination of guards across
the different agents at a location. This is apparent in the
negotiation of Figure~\ref{fig:nonsync}. For $c$ to be feasible, both
the agents have to reach node $n_4$ via $n_2$, and not via
$n_3$. Therefore, one can view the time elapse at $n_1$ as a
collective decision between $p$ and $q$, which impacts their future
paths.

\begin{figure}[t]
\centering 
\begin{tikzpicture}[scale=0.9, every node/.style={transform shape}]

\begin{scope}[line width=2pt]
\draw (0,0) node [left] {$n_1\,\,$} -- (1.5,0);
\draw (-1.5,-1.5) node [left] {$n_2\,\,$} -- (0,-1.5);
\draw (1.5,-1.5) node [left] {$n_3\,\,$} -- (3,-1.5);
\draw (0,-3) node [left] {$n_4\,\,$} -- (1.5,-3);
\end{scope}

\filldraw [fill=white]  (0,0)  circle (2.2mm) node 		(n1p) {$p$};
\filldraw [fill=white]  (1.5,0)  circle (2.2mm) node 		(n1q) {$q$};

\filldraw [fill=white]  (-1.5,-1.5)  circle (1.5mm) node 	(n2p) {};
\filldraw [fill=white]  (0,-1.5)  circle (1.5mm) node 	(n2q) {};

\filldraw [fill=white]  (1.5,-1.5)  circle (1.5mm) node 	(n3p) {};
\filldraw [fill=white]  (3,-1.5)  circle (1.5mm) node 	(n3q) {};

\filldraw [fill=white]  (0,-3)  circle (1.5mm) node 		(n4p) {};
\filldraw [fill=white]  (1.5,-3)  circle (1.5mm) node 	(n4q) {};

\filldraw [white, fill=white]  (0,-4.5)  circle (1.5mm) node 	(vp) {};
\filldraw [white, fill=white]  (1.5,-4.5)  circle (1.5mm) node 	(vq) {};

\begin{scope}[thick,->]

\draw (n1p) + (-0.15,-0.15) -- (n2p);
\draw (n1q) + (-0.15,-0.15) -- (n2q);

\draw (n1p) + (0.15,-0.15) -- (n3p);
\draw (n1q) + (0.15,-0.15) -- (n3q);

\draw (n2p) -- (n4p) node [midway, left] {$x \geq 2$};
\draw (n2q) -- (n4q) node [near start, left] {$y \leq 1$};

\draw (n3p) -- (n4p) node [near start, right] {$x \leq 1$};
\draw (n3q) -- (n4q) node [midway, right] {$y \geq 2$};

\draw (n4p) -- (vp) node [midway, right] {$c$} node [midway, left] {$x = 2$};
\draw (n4q) -- (vq) node [midway, left] {$c$} node [midway, right] {$y = 1$};
\end{scope}

\end{tikzpicture}
\caption{Example of a synchronization-free local-timed
    negotiation. If $n_3$ is fired then the guard $y = 1$ on the
    transition from $n_4$ can not be satisfied.}
\label{fig:nonsync}
\end{figure}
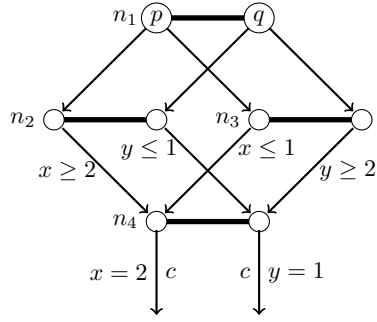

The goal of this section is to show that reachability is
\PSPACE-complete for synchronization-free negotiations. Here is an
overview of our proof. Firstly, as there are no synchronization
constraints, we observe that the reference clocks are not useful at
all in this fragment. We will then quotient the space of valuations by
applying the classical region equivalence between clocks of each
agent. This generates a finite automaton that accepts all the untimed
location sequences that are feasible in the negotiation.

\begin{definition}[$\equiv^p_M$ and $\reg$ equivalences]
  Let $p$ be an agent, and let $M$ be the biggest constant appearing
  in the negotiation. We say $v \preg v'$ if for all $x, y \in X_p$:
  \begin{itemize}
  \item either $\int{v(x)} = \int{v'(x)}$ or both
    $\int{v(x)}, \int{v'(x)} > M$,
  \item if $v(x) \le M$, then $\frac{v(x)} = 0$ iff $\frac{v'(x)} = 0$
    
  \item if $v(x) \le M$ and $v(y) \le M$, we have
    $\frac{v(x)} \le \frac{v(y)}$ iff $\frac{v'(x)} \le \frac{v'(y)}$.
  \end{itemize}
  We define $v \reg v'$ if $v \preg v'$ for all agents $p \in P$. We
  denote by $\region{v}$ the equivalence class of a valuation $v$ with
  respect to the $\reg$ equivalence. We will call $\reg$ as the
  \emph{product-region} equivalence, and the equivalence classes of
  $\reg$ as product-regions.
\end{definition}

Notice that reference clocks do not appear at all in the above
definition. We next state there are finitely many product-regions.

\begin{lemma}\label{lem:size-of-product-regions}
The $\reg$ equivalence is of finite index: the number of
product-regions is bounded by $\Oo(|X|! \cdot 2^{|X|} \cdot
(2M + 1)^{|X|})$.
\end{lemma}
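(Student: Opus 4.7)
The plan is to reduce the counting of product-regions to the classical Alur--Dill bound on region equivalences, applied agent-by-agent. First I would observe that each $\preg$ is, up to notation, exactly the standard region equivalence on the clock set $X_p$ with maximum constant $M$: the definition of $\preg$ quantifies only over $x, y \in X_p$, so reference clocks are genuinely irrelevant to the equivalence and can be ignored throughout the count.

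Next I would appeal to the standard Alur--Dill bound: the number of equivalence classes of the region equivalence on $n$ clocks with maximum constant $M$ is at most $n! \cdot 2^n \cdot (2M+1)^n$, accounting for (i) the integer parts of the clocks capped at $M$ (roughly $2M+1$ choices per clock), (ii) the subset of clocks whose fractional part is exactly zero (at most $2^n$ subsets), and (iii) the linear order on the fractional parts of the clocks whose value does not exceed $M$ (at most $n!$ orderings). Applied to $X_p$, this gives at most $|X_p|! \cdot 2^{|X_p|} \cdot (2M+1)^{|X_p|}$ classes of $\preg$.

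Then I would use that, by definition, $v \reg v'$ iff $v \preg v'$ for every agent $p$. Since the family $\{X_p\}_{p \in P}$ partitions $X$ and a valuation's restriction to $X_p$ is independent across agents, a $\reg$-class is uniquely determined by the tuple of per-agent $\preg$-classes of the projections of $v$ onto each $X_p$. Hence the number of product-regions is at most
\[
\prod_{p \in P} |X_p|! \cdot 2^{|X_p|} \cdot (2M+1)^{|X_p|} = \Bigl(\prod_{p \in P} |X_p|!\Bigr) \cdot 2^{|X|} \cdot (2M+1)^{|X|},
\]
using $\sum_{p \in P} |X_p| = |X|$.

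Finally, the multinomial inequality $\prod_{p \in P} |X_p|! \le |X|!$ (equivalent to the multinomial coefficient being a positive integer) yields the claimed $\Oo(|X|! \cdot 2^{|X|} \cdot (2M+1)^{|X|})$ bound. I do not anticipate any genuine obstacle here: the argument is a direct lift of the standard Alur--Dill counting, and the only subtleties are the two observations made above, namely that reference clocks do not enter the equivalence and that the per-agent class tuple faithfully encodes the product-region. Both follow immediately from Definition of $\preg$ and $\reg$.
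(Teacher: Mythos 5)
Your proposal is correct and follows essentially the same route as the paper's proof: count the $\preg$-classes per agent via the Alur--Dill region bound on $X_p$ (reference clocks being irrelevant), observe that a product-region is a tuple of per-agent classes, and combine the product using $\prod_{p}|X_p|!\le |X|!$ together with $\sum_p |X_p| = |X|$. No gaps.
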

\begin{proof}
  Let us call an equivalence class wrt $\preg$ as a $p$-region. This
  equivalence ignores values of clocks other than the local-clocks of
  $p$. Therefore, any two valuations that differ only in values of
  clocks outside  $X_p$ will be equivalent. For clocks in $X_p$ the
  equivalence is simply the classical region equivalence of timed
  automata. Hence, the number of $p$-regions equals the number of
  regions (with the same maximum constant), which is bounded by
  $K_p := |X_p|! \cdot 2^{|X_p|} \cdot (2M + 1)^{|X_p|}$ (Lemma 4.5,
  \cite{DBLP:journals/tcs/AlurD94}). Now, by definition, $v \reg v'$
  if both $v$ and $v'$ are in 
  the same $p$-region for every agent $p$. Hence each product-region
  can be seen as a tuple consisting of a $p$-region for each $p$. So,
  the number of product-regions is bounded by $\prod_{p \in P}
  K_p$. This can be shown to be bounded by $|X|! \cdot 2^{|X|} \cdot
(2M + 1)^{|X|}$, using the fact that $|X_1|! \cdot |X_2|! \le (X_1 +
X_2)!$, and $c^{|X_1|} \cdot c^{|X_2|} = c^{|X_1| + |X_2|}$ for any
constant $c$. \qed  
\end{proof}

Here are some properties of the product-region equivalence, that
follow from the region equivalence.

\begin{restatable}{lemma}{localDelay}
\label{lemma:region-local-delay}
  Let $v, v'$ be valuations such that $v \reg v'$. Then, for all
  local-delays $\Delta$, there exists a local-delay $\Delta'$ such
  that $v + \Delta \reg v' + \Delta'$.
\end{restatable}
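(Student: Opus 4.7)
The plan is to reduce to the classical time-elapse property of region equivalence by treating each agent independently. Since $\reg$ is by definition the conjunction of the per-agent equivalences $\preg$, and since a local-delay $\Delta$ acts on agent $p$'s clocks (those in $X_p$ together with $t_p$) as a uniform time elapse by $\Delta(p)$, the delay $\Delta'(p)$ for each agent can be chosen separately.

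First I fix an agent $p$ and restrict $v$ and $v'$ to $X_p$. Because $v \preg v'$, these restricted valuations lie in the same classical region (with maximum constant $M$) in the sense of Alur--Dill. The standard time-elapse property for regions (Lemma~4.6 of~\cite{DBLP:journals/tcs/AlurD94}) then gives, for the delay $\Delta(p)$ applied to $v|_{X_p}$, some $d_p \in \Rpos$ such that $v|_{X_p} + \Delta(p)$ and $v'|_{X_p} + d_p$ are in the same classical region. Setting $\Delta'(p) := d_p$ yields $v + \Delta \preg v' + \Delta'$, because $\preg$ inspects only the clocks in $X_p$ and in particular does not look at the reference clock $t_p$ or at the clocks of any other agent.

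Performing this choice independently for every $p \in P$ produces a local-delay $\Delta' \in \Rpos^{|P|}$ for which $v + \Delta \preg v' + \Delta'$ holds simultaneously for all agents; by definition this is $v + \Delta \reg v' + \Delta'$. The side condition $v(x) \le v(t_p)$ required of a valuation is preserved, since for each agent both $x \in X_p$ and $t_p$ are incremented by the same amount $\Delta(p)$ (resp.\ $d_p$), so $v + \Delta$ and $v' + \Delta'$ are genuine valuations.

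I do not expect a real obstacle here. The only conceptual point worth emphasising is that reference clocks never appear in the definition of $\preg$, so no cross-agent matching constraint is imposed and the per-agent choices of $d_p$ are genuinely independent. The lemma is therefore essentially a repackaging of the classical region time-elapse lemma across the disjoint agent components of the clock set $X$.
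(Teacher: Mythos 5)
Your proof is correct and follows essentially the same route as the paper's: decompose the problem agent by agent and invoke the classical Alur--Dill region time-elapse lemma on each agent's local clocks, using the fact that $\preg$ ignores all other clocks. The only cosmetic difference is that the paper sequentializes $\Delta$ into single-agent delays while you choose all the $d_p$ in parallel; the underlying argument is identical.
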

\begin{proof}
  The delay $\Delta$ can be seen as a sequence of local-delays
  $(\Delta(p_1), 0, \dots, 0)$, $(0, \Delta(p_2), 0, \dots, 0)$
  $\dots$ $(0, \dots, 0, \Delta(p_k))$ where only one agent makes a
  delay in each step. Therefore, it is sufficient to show the lemma when $\Delta$
  has a non-zero delay only for one process. Assume
  $\Delta(p) = \delta \ge 0$, and $\Delta(p') = 0$ for all
  $p' \neq p$. Notice that in $v + \Delta$ the values of clocks of
  agents different from $p$ do not change. If we restrict to the local
  clocks of agent $p$ (without the reference clock), we can use the
  classical region equivalence to get a $\delta'$ and a local-delay
  satisfying $\Delta(p) = \delta'$, $\Delta(p') = 0$ for all
  $p' \neq p$ such that $v + \Delta \reg v' + \Delta'$.  \qed
\end{proof}

The next lemma follows by definition.
\begin{lemma}\label{lem:region-guard-reset}
  Let $v, v'$ be valuations such that $v \reg v'$. Let $g$ be a guard
  with constants at most $M$. Then: (1) $v \models g$ iff
  $v' \models g$, and (2) for all subsets of local clocks $Y$, we have
  $v[Y] \reg v'[Y]$.
\end{lemma}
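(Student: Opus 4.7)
The plan is to verify both claims directly from the definition of $\preg$, reducing each part to a per-agent check and then to per-clock reasoning. Both claims are routine; the only minor subtlety is handling the boundary between clocks below $M$ and clocks above $M$ in part~(1), and the mixed case in part~(2) where one clock is reset and another is not.

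For part~(1), I would reduce to a single atomic constraint $x \bowtie c$ with $c \le M$, since a guard is a conjunction of such constraints. Let $p$ be the agent with $x \in X_p$; by hypothesis $v \preg v'$. I would split on the value of $v(x)$. If $\int{v(x)} > M$, then by the first clause of $\preg$, also $\int{v'(x)} > M$, so both $v(x)$ and $v'(x)$ strictly exceed $c$, and therefore the truth of $x \bowtie c$ is the same on both sides (true for $\bowtie\ \in \{>, \ge\}$, false otherwise). If $\int{v(x)} \le M$, then $\int{v(x)} = \int{v'(x)}$ and, by the second clause, $\frac{v(x)} = 0$ iff $\frac{v'(x)} = 0$; this is exactly the information needed to decide any comparison of $v(x)$ with an integer constant in $\{0, \ldots, M\}$.

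For part~(2), I would fix an agent $p$ and verify each of the three clauses of $\preg$ on the pair $(v[Y], v'[Y])$ for clocks in $X_p$. For $x \in X_p \cap Y$ the reset forces $v[Y](x) = v'[Y](x) = 0$, making the first two clauses trivial. For $x \in X_p \setminus Y$ the values are unchanged, so the first two clauses transfer from $v \preg v'$. For the third clause on a pair $(x,y) \in X_p \times X_p$: if neither clock is in $Y$, the inequality between fractional parts is inherited; if both are in $Y$, both fractional parts are $0$ and the inequality is trivially preserved; if exactly one is in $Y$, one fractional part is $0$ while the other is a genuine fractional part, and the comparison $0 \le \frac{\cdot}$ holds on both sides.

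The main ``obstacle'' is simply bookkeeping the case split for part~(2); the argument involves no ideas beyond the definition, so I would present it as two short paragraphs invoking the relevant clauses of $\preg$ and then conclude by noting that since the check succeeds for every agent $p$, we have $v[Y] \reg v'[Y]$ by definition.
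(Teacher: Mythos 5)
Your proof is correct and takes the same route as the paper, which simply states that the lemma ``follows by definition'' and gives no further argument; your case analysis is the standard verification of that claim. The only cosmetic quibble is that in part (1) the paper's second clause is conditioned on $v(x) \le M$ rather than $\int{v(x)} \le M$, so the boundary case $v(x) \in (M, M+1)$ deserves a word, but this does not affect the substance.
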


For a valuation $v$ and a product-region $r$, we write $v \in r$ to
mean that $r$ equals $[v]$. We will now build a finite automaton using
the product-regions.

\begin{definition}[Product-region automaton]\label{def:prod-reg-aut}
  States of this NFA are of the form $(C, r)$ where $C$ is a marking
  and $r$ is a product-region. There is a transition
  $(C, r) \xra{(n, a)} (C', r')$ if for some valuation $v \in r$, and
  for some local-delay $\Delta$, we have
  $(C, v) \xra{\Delta, (n, a)} (C', v')$ such that $v' \in r'$. The
  initial state is the initial marking $C_0$ and the region $r_0$
  containing the valuation that maps all clocks to $0$.

  We denote the product-region automaton as $\prg(\Nn)$.
\end{definition}

\begin{restatable}{lemma}{negToReg}
\label{lem:neg-to-reg}
  For every run
  $(C_0, v_0) \xra{\Delta_0, \ell_0} (C_1, v_1) \xra{\Delta_1, \ell_1}
  \cdots \xra{\Delta_{m-1}, \ell_{m-1}} (C_m, v_m)$ in the local-timed
  negotiation $\Nn$, there is a run
  $(C_0, [v_0]) \xra{\ell_0} (C_1, [v_1]) \xra{\ell_1} \cdots
  \xra{\ell_{m-1}} (C_m, [v_m])$ in $\prg(\Nn)$.
\end{restatable}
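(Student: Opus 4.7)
The plan is to prove the statement by a straightforward induction on the length $m$ of the run. The key observation is that the conclusion is essentially built into Definition~\ref{def:prod-reg-aut}: a transition in $\prg(\Nn)$ exists whenever it is witnessed by some actual small step in $\Nn$ between valuations in the corresponding regions. So the small steps of the given run are themselves the witnesses required by the definition, and no genuine ``lifting'' argument is needed.

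For the base case $m=0$, the run in $\Nn$ consists of just the initial configuration $(C_0, v_0)$, and the corresponding path in $\prg(\Nn)$ is just the initial state $(C_0, \region{v_0})$, which by Definition~\ref{def:prod-reg-aut} coincides with the initial state of $\prg(\Nn)$ (since $v_0$ is the all-zero valuation and $r_0 = \region{v_0}$).

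For the inductive step, suppose the statement holds for all runs of length at most $i$, and consider a run of length $i+1$ whose prefix is $(C_0, v_0) \xra{\Delta_0, \ell_0} \cdots \xra{\Delta_{i-1}, \ell_{i-1}} (C_i, v_i)$ followed by the step $(C_i, v_i) \xra{\Delta_i, \ell_i} (C_{i+1}, v_{i+1})$. By the induction hypothesis, we already have a path $(C_0, \region{v_0}) \xra{\ell_0} \cdots \xra{\ell_{i-1}} (C_i, \region{v_i})$ in $\prg(\Nn)$. To extend it by one step, we must exhibit a transition $(C_i, \region{v_i}) \xra{\ell_i} (C_{i+1}, \region{v_{i+1}})$, which by Definition~\ref{def:prod-reg-aut} holds whenever there is \emph{some} valuation $v \in \region{v_i}$ and \emph{some} local-delay $\Delta$ producing a small step $(C_i, v) \xra{\Delta, \ell_i} (C', v')$ in $\Nn$ with $v' \in \region{v_{i+1}}$. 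Taking $v := v_i$ and $\Delta := \Delta_i$ furnishes exactly such a witness, using the small step already given in the run, and trivially $v_{i+1} \in \region{v_{i+1}}$.

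There is no real obstacle in this direction: the existential nature of transitions in $\prg(\Nn)$, together with the fact that the region equivalence ignores reference clocks and thus imposes no additional constraint, means the original run itself certifies every required region transition. Lemmas~\ref{lemma:region-local-delay} and~\ref{lem:region-guard-reset} are not invoked here; they will be needed for the converse direction, where an abstract path in $\prg(\Nn)$ must be concretized into a genuine run of $\Nn$.
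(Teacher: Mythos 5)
Your proposal is correct and takes essentially the same approach as the paper, whose entire proof is ``Follows from Definition~\ref{def:prod-reg-aut}''; you have simply spelled out the routine induction, correctly identifying that each small step of the given run is itself the existential witness required by the definition of transitions in $\prg(\Nn)$.
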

\begin{proof}
  Follows from Definition~\ref{def:prod-reg-aut}.
\end{proof}

\begin{restatable}{lemma}{regToNeg}
    \label{lem:reg-to-neg}
  For every run
  $(C_0, r_0) \xra{\ell_0} (C_1, r_1) \xra{\ell_1} \cdots
  \xra{\ell_{m-1}} (C_m, r_m)$ in the product-region automaton
  $\prg(\Nn)$, there is a run
  $(C_0, v_0) \xra{\Delta_0, \ell_0} (C_1, v_1)$
  $\xra{\Delta_1, \ell_1} \cdots \xra{\Delta_{m-1}, \ell_{m-1}} (C_m,
  v_m)$ in $\Nn$ such that $v_i \in r_i$ for each $0 \le i \le m$.
\end{restatable}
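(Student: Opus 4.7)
The proof will go by induction on the length $m$ of the run in $\prg(\Nn)$, maintaining the inductive invariant that after $i$ small steps we have built a concrete run $(C_0,v_0)\xra{\Delta_0,\ell_0}\cdots\xra{\Delta_{i-1},\ell_{i-1}}(C_i,v_i)$ of $\Nn$ with $v_j \in r_j$ for all $j \le i$. The base case is immediate: the initial valuation of $\Nn$ assigns $0$ to every clock, and by definition of the product-region automaton the initial region $r_0$ is exactly $[v_0]$, so $v_0 \in r_0$.

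For the inductive step, assume $v_i \in r_i$. By the definition of the transition $(C_i, r_i) \xra{\ell_i} (C_{i+1}, r_{i+1})$ in $\prg(\Nn)$, there exist a witness valuation $\tilde{v} \in r_i$ and a local-delay $\tilde{\Delta}$ such that $(C_i, \tilde{v}) \xra{\tilde{\Delta}, \ell_i} (C_{i+1}, \tilde{v}')$ for some $\tilde{v}' \in r_{i+1}$. Since $v_i \reg \tilde{v}$, I would invoke Lemma~\ref{lemma:region-local-delay} to obtain a local-delay $\Delta_i$ with $v_i + \Delta_i \reg \tilde{v} + \tilde{\Delta}$. I then take this $\Delta_i$ as the elapse from $v_i$ in the run of $\Nn$.

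It remains to verify that $\ell_i = (n_i, a_i)$ is indeed enabled at $(C_i, v_i + \Delta_i)$, and that the resulting valuation $v_{i+1}$ lies in $r_{i+1}$. The marking condition $n_i \in C_i(p)$ for every $p \in \dom(n_i)$ is inherited from the run in $\prg(\Nn)$ since the marking components coincide. The guard check $v_i + \Delta_i \models g_p$ for every $p \in \dom(n_i)$ follows from Lemma~\ref{lem:region-guard-reset}(1) applied to the equivalence $v_i+\Delta_i \reg \tilde{v}+\tilde{\Delta}$, using that $\tilde{v}+\tilde{\Delta}$ satisfies these guards (they were satisfied in the witness step). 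Because we are in the synchronization-free fragment, $\Sync = \emptyset$, so no synchronization constraint needs to be checked. The target marking $C_{i+1}$ is functionally determined by $(n_i, a_i)$ and hence matches. Letting $Y = \bigcup_{p \in \dom(n_i)} Y_p$ be the reset set and $v_{i+1} := (v_i + \Delta_i)[Y]$, Lemma~\ref{lem:region-guard-reset}(2) gives $v_{i+1} \reg (\tilde{v}+\tilde{\Delta})[Y] = \tilde{v}'$, so $v_{i+1} \in [\tilde{v}'] = r_{i+1}$, closing the induction.

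There is no substantial obstacle: the proof is a standard region-to-timed simulation, and the only place where the local-time feature matters is the use of Lemma~\ref{lemma:region-local-delay}, which guarantees that we can pick per-agent delays from $v_i$ matching those from the witness $\tilde{v}$ while staying in the same product-region. The absence of synchronizing nodes is what makes this straightforward here; in later sections, where $\Sync \neq \emptyset$, the same scheme would additionally have to argue that reference-clock equalities needed at synchronizing nodes can be preserved, which is not required in the present setting.
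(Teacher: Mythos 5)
Your proof is correct and follows essentially the same route as the paper's: an inductive one-step extension using the witness valuation from the definition of $\prg(\Nn)$, Lemma~\ref{lemma:region-local-delay} to transfer the local-delay, and Lemma~\ref{lem:region-guard-reset} for guards and resets. The paper's version is merely terser; your added detail (explicit handling of markings, the vacuous synchronization check, and the reset step) fills in exactly what the paper leaves implicit.
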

\begin{proof}
  Assume we have constructed a run
  $(C_0, v_0) \xra{\Delta_0, \ell_0} \cdots \xra{\Delta_{i-1},
    \ell_{i-1}} (C_i, v_i)$ with $v_j \in r_j$ for all
  $0 \le j \le i \le m$. By definition of the transitions of
  $\prg(\Nn)$, there is some $u_i \in r_i$ and some local-delay
  $\theta_i$ s.t.
  $(C_i, u_i) \xra{\theta_i, \ell_i} (C_{i+1}, u_{i+1})$. By
  Lemmas~\ref{lemma:region-local-delay} and
  \ref{lem:region-guard-reset}, there exists a $\Delta_i$ satisfying 
  $(C_i, v_i) \xra{\Delta_i, \ell_i} (C_{i+1}, v_{i+1})$ with
  $v_{i+1} \in r_{i+1}$. This shows that we can extend the run one
  step at a time to get a run with as required by the lemma.
  \qed
\end{proof}

\begin{theorem}
  Reachability is \PSPACE-complete for synchronization-free
  local-timed negotiations.
\end{theorem}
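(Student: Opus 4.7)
My plan is a two-sided argument that leverages the product-region automaton constructed above. For the upper bound I transfer the reachability question from $\Nn$ to $\prg(\Nn)$ using Lemmas~\ref{lem:neg-to-reg} and~\ref{lem:reg-to-neg}: a location is reachable in $\Nn$ iff a state labelled by that location is reachable in $\prg(\Nn)$. Although the product-region automaton has exponentially many states by Lemma~\ref{lem:size-of-product-regions}, each state $(C,r)$ admits a polynomial-size encoding. The marking $C$ is a tuple of subsets of $N$ indexed by agents, and takes $O(|P|\cdot|N|)$ bits. A product-region is specified by, for each clock $x\in X$, its integer part truncated at $M$ together with a flag indicating whether it exceeds $M$, and, for each agent $p$, a total preorder on the fractional parts of clocks in $X_p$; all of this fits in space polynomial in the input. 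From this encoding, the local-delay and outcome successors at the region level can be enumerated in polynomial time by the standard region-successor computation of Alur and Dill, applied independently within each agent. A nondeterministic on-the-fly search that guesses and verifies one small step at a time thus runs in \textsf{NPSPACE}, and Savitch's theorem gives \PSPACE.

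For the lower bound I reduce from reachability in classical timed automata, which is \PSPACE-hard~\cite{DBLP:journals/tcs/AlurD94}. Given a timed automaton $A$ with states $Q$, clocks $X_A$, initial state $q_0$ and target state $q_f$, I build a synchronization-free local-timed negotiation with a single agent $p$: set $N := \{n_{in}\}\cup Q$, $\dom(n)=\{p\}$ for every node, $X_p := X_A$, and let $\delta_p$ copy every transition $q\xra{g,Y} q'$ of $A$ as an outcome from node $q$ with guard $g$, reset set $Y$, and singleton target $M=\{q'\}$. The initial outcome from $n_{in}$ has no guard and no reset and leads to $\{q_0\}$. With $P=\{p\}$, the condition $\Sync=\emptyset$ is automatic and no synchronization constraints arise. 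A straightforward induction on run length shows that $A$ reaches $q_f$ iff the negotiation reaches the location at node $q_f$, giving the matching lower bound.

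I do not anticipate a serious obstacle: the crucial observation is that the exponential blow-up in Lemma~\ref{lem:size-of-product-regions} is in the \emph{count} of product-regions, not in their descriptive complexity, so the usual on-the-fly reachability argument for timed automata carries over verbatim once the equivalence with $\prg(\Nn)$ is available. The only bookkeeping specific to negotiations is that a marking assigns a subset of nodes to each agent rather than a single control location, but this still fits in polynomial space and does not affect the overall complexity.
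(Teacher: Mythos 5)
Your proposal is correct and follows essentially the same route as the paper: reduce to reachability in $\prg(\Nn)$ via Lemmas~\ref{lem:neg-to-reg} and~\ref{lem:reg-to-neg}, observe that each state $(C,r)$ has a polynomial-size encoding despite the exponential state count from Lemma~\ref{lem:size-of-product-regions}, guess a path nondeterministically, and obtain hardness by viewing a timed automaton as a single-agent negotiation. The extra detail you give on encoding regions and the explicit appeal to Savitch's theorem are fine elaborations of the same argument.
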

\begin{proof}
  A location $\ell = (n,a)$ is reachable in $\Nn$ iff it is reachable
  in $\prg(\Nn)$, thanks to Lemmas~\ref{lem:neg-to-reg} and
  \ref{lem:reg-to-neg}. Let $K$ be the size of the negotiation counted
  as the sum of the number of nodes, outcomes, clocks and the sum of
  the binary encoding of each constant appearing in the guards. 

  We can non-deterministically guess a path from
  the initial node to an edge containing $\ell$. Each node $(C, r)$
  requires polynomial space: $C$ can be represented as a vector of bit
  strings, one for each agent. Each bit string gives the set of nodes
  where the agent can engage in, and hence has a length equal to the
  number of nodes. The region uses constants of size at most
  $M$. The size of $\prg(\Nn)$ is
  a product of the number of markings, and the number of regions for
  each agent. Both the number of markings and the number of
  product-regions is $2^{\Oo(K)}$
  (Lemma~\ref{lem:size-of-product-regions}). These two arguments sum
  up to give the \PSPACE\ upper   bound.

  \PSPACE-hardness follows from the hardness of timed automata
  reachability. Each timed automaton can be seen as a negotiation over
  a single agent. Therefore, reachability in timed automata can be
  reduced to reachability in a local-timed negotiation (when there is
  a single agent, both the notions of local-time or global-time
  coincide). \qed 
\end{proof}
 
\section{Always-synchronizing negotiations}
\label{sec:always-synchr-negot}

We will now look at the fragment where every interaction forces a
synchronization.  We say that a local-timed negotiation is
\emph{always-synchronizing} if every node is a synchronization node,
that is $\Sync = N$. The negotiation in Figure~\ref{fig:abc} can be
seen as an always-synchronizing negotiation (in nodes that are local
to one agent, the synchronization condition is vacuously true). We
first remark that a region based argument is not immediate in this
fragment. In order to satisfy the synchronization constraint, we check
conditions of the form $t_p = t_q$. Therefore, we cannot decouple the
time elapse of $p$ and $q$ completely, as in the previous
section. Instead, we need to keep track of the difference $t_p - t_q$
in the equivalence. But then, there is no bound $M'$ that allows us to
club together all valuations with $t_p - t_q > M'$. This is because,
$t_q$ can perform local delays to catch up with $p$, and in
particular, from $t_p - t_q > M'$ we may get to a situation where
$t_p - t_q \le M'$. This kind of a mechanics does not happen in
classical timed automata, where once a clock is beyond $M$ it always
stays beyond $M$ until the next reset. In the previous section, we
avoided this problem since we did not need to keep track of the
reference clocks.

We will make use of a different argument altogether, which is used in
the proof of equivalence between the local-time and global-time
semantics for networks of timed
automata~\cite{DBLP:conf/concur/BengtssonJLY98,DBLP:conf/concur/GovindHSW19}. In
always-synchronizing negotiations, every location $(n, a)$ is executed
at a unique timestamp given by the reference clock value of the
participating processes. For example, the sequence $aabbc$ in the
negotiation of Figure~\ref{fig:abc} can be associated with the
timestamp $12123$: the first $a$ occurs at $t_p = 1$, the second $a$
at $t_p = 2$, the first $b$ at $t_q = 1$ and so on.  The main
observation is that whenever there is a $t_i t_{i+1}$ in this sequence
with $t_{i+1} < t_i$, we can reorder the actions corresponding to
them, and still get a valid run. For example, the run
$(a, 1) (a, 2) (b, 1) (b, 2) (c, 3)$ described above can be reordered
as $(a, 1) (b, 1) (a, 2) (b, 2) (c, 3)$ which is still a feasible run
of the negotiation.

We will first show that every run of an always-synchronizing sequence
can be reordered to a ``monotonic'' run. Next, we describe a timed
automaton that accepts all monotonic runs of the negotiation. This
gives a procedure for reachability, as reachability in the negotiation
reduces to checking whether there is a run of a timed automaton that
fires an edge.

\begin{definition}
  Let $\Nn$ be an always-synchronizing negotiation. Consider a run
  $\rho:= (C_0, v_0) \xra{\Delta_0, \ell_0} (C_1, v_1) \xra{\Delta_1,
    \ell_1} \cdots \xra{\Delta_{m-1}, \ell_{m-1}} (C_m, v_m)$, where
  $\ell_i = (n_i, a_i)$ for every $i$.

  We associate timestamps $\theta^\rho_i := v_i(t_p) + \Delta_i(p)$
  where $p$ is some agent participating in the negotiation $n_i$.  The
  run $\rho$ is monotonic if $\theta^\rho_i \le \theta^\rho_j$ for
  every $i \le j$.
\end{definition}

Fix an always-synchronizing negotiation $\Nn$ for the rest of this
section.

\begin{restatable}{lemma}{monotonic}
\label{lem:monotonic}
  For every location $(n,a)$ that is reachable, there is a monotonic
  run containing a small step that executes $\ell$.
\end{restatable}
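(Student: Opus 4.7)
The plan is a bubble-sort argument on the run $\rho$. I will show that whenever a run reaching $(n,a)$ contains an adjacent inversion --- a position $i$ with $\theta^\rho_i > \theta^\rho_{i+1}$ --- one can swap the two small steps to obtain another valid run that still reaches $(n,a)$ and has strictly fewer inversions. Since the number of inversions is a non-negative integer, iterating this procedure terminates in a monotonic run reaching $(n,a)$.

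The first key step is to show that any adjacent inversion forces $\dom(n_i) \cap \dom(n_{i+1}) = \emptyset$. If some agent $p$ were in both domains, synchronization at $n_i$ would give $(v_i + \Delta_i)(t_p) = \theta^\rho_i$; since reference clocks are never reset, $v_{i+1}(t_p) = \theta^\rho_i$, and synchronization at $n_{i+1}$ would then yield $\theta^\rho_{i+1} = v_{i+1}(t_p) + \Delta_{i+1}(p) \ge \theta^\rho_i$, contradicting the inversion. With disjoint domains, the guards, resets, and clocks touched by $\ell_i$ and $\ell_{i+1}$ are independent.

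The swap itself redistributes local delays as follows: for $p \in \dom(n_{i+1})$ set $\Delta'_i(p) = \Delta_i(p) + \Delta_{i+1}(p)$ and $\Delta'_{i+1}(p) = 0$; for $p \in \dom(n_i)$ set $\Delta'_i(p) = 0$ and $\Delta'_{i+1}(p) = \Delta_i(p)$; other agents keep their original delays. A direct check shows that $\ell_{i+1}$ now fires first with each of its participants at exactly the clock values they had when $\ell_{i+1}$ fired in $\rho$, and $\ell_i$ fires second with its participants at their original values, so synchronization and guards hold and resets carry over. The resulting valuation $v'_{i+2}$ agrees with the original $v_{i+2}$ except that clocks $y \in X_p \cup \{t_p\}$ for $p \in \dom(n_i)$ are smaller by exactly $\Delta_{i+1}(p)$.

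To replay the rest of $\rho$, I would add $\Delta_{i+1}(p)$ to $\Delta_{i+2}(p)$ for each $p \in \dom(n_i)$; this exactly cancels the shift, so the valuation reached just before $\ell_{i+2}$ coincides with $v_{i+2} + \Delta_{i+2}$, and from then on every small step can be executed with its original delay. The location $(n,a)$ is still executed --- either at position $i+1$, at position $i$, or at an unchanged position further along --- and the number of inversions decreases by exactly one per swap. The main obstacle is this bookkeeping: ensuring all new delays remain non-negative and that the adjusted intermediate valuation really enables every subsequent step. Both are immediate from the construction, since each new delay is a sum of original non-negative delays and the downstream correction is an additive increment to an already non-negative $\Delta_{i+2}$.
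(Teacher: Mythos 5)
Your proof is correct and follows essentially the same strategy as the paper's: an adjacent inversion forces $\dom(\ell_i)\cap\dom(\ell_{i+1})=\emptyset$ (since synchronization pins the reference clock of any common agent to $\theta^\rho_i$ and reference clocks never decrease), so the two small steps commute, and a bubble-sort on inversions terminates in a monotonic run that still executes $(n,a)$. The only difference is bookkeeping: the paper first normalizes the run so that $\Delta_j(p)=0$ whenever $p\notin\dom(\ell_j)$, after which the swap needs no delay adjustment at all, whereas you redistribute the delays at the swapped positions and compensate at step $i+2$ --- both handle the same issue and both are valid.
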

\begin{proof}
  Let
  $\rho := (C_0, v_0) \xra{\Delta_0, \ell_0} (C_1, v_1) \xra{\Delta_1,
    \ell_1} \cdots \xra{\Delta_{m-1}, \ell_{m-1}} (C_m, v_m)$ be a run
  such that $\ell_{m-1} = (n, a)$. Let us write $\dom(\ell_j)$ for the
  agents that participate in the negotiation corresponding to location
  $\ell_j$.

  Without loss of generality, we can
  assume that for every $0 \le j \le m-1$, we have $\Delta_j(p) = 0$
  if $p \notin \ell_j$. Indeed, if $\Delta_j(p)$ is
  non-zero, we can move this time elapse to the immediate next
  position in the run where $p$ participates, and if $p$ does not
  participate in any position to the right, we can simply change the
  delay to $0$ and still preserve the feasibility of the sequence.

  Consider a segment $\sigma:= (C_j, v_j) \xra{\Delta_j, \ell_j} (C_{j+1},
  v_{j+1}) \xra{\Delta_{j+1}, \ell_{j+1}} (C_{j+2}, v_{j+2})$ such
  that $\dom(\ell_j) \cap \dom(\ell_{j+1})$. We claim that the two
  outcomes can be commuted to give a run where the end points are the
  same: that is, a run of the form $\sigma':= (C_j, v_j) 
  \xra{\Delta_{j+1}, \ell_{j+1}} (C'_{j+1}, v'_{j+1}) \xra{\Delta_j,
    \ell_j} (C_{j+1}, v_{j+2})$. Here is an argument. Suppose $p
  \notin \dom(\ell_j) \cup \dom(\ell_{j+1})$. For all $x \in X_p$ we have
  $v_j(x) = v_{j+1}(x) = v_{j+2}(x)$. The same is true in
  $\sigma'$. Suppose $p \in \dom(\ell_j)$. Then by assumption, $p
  \notin \dom(\ell_{j+1})$. Therefore $\Delta_{j+1}(p) =
  0$. This suffices to prove the claim. 
 
  Coming back to the run $\rho$. Suppose there is an index $j$ such
  that $\theta^\rho_{j+1} < \theta^\rho_{j}$. Then there are no common
  agents in $\ell_j$ and $\ell_{j+1}$. We can apply the commutation
  argument and swap the two small steps. We can keep doing this until
  there is no index which violates monotonicity. 
  Therefore, for every run $\rho$ of the negotatiation, we are able to
  associate a monotonic run $\rho'$ as described above containing all
  the locations. The location $\ell_{m-1} = (n, a)$ therefore appears
  somewhere in the run.  \qed
\end{proof}

\begin{definition}
  For an always-synchronizing negotiation $\Nn$ we define a timed
  automaton $\ta(\Nn)$ as follows.  States are the set of all markings
  possible in $\Nn$. There is a transition $C \xra{g, (n,a) , Y} C'$
  on guard $g$, action $(n,a)$ and reset $Y$ if (1) $n$ is enabled in
  $C$, (2) there are transitions $\delta_p(n, a) = (g_p,M_p, Y_p)$ and
  $g$ is the conjunction of all $g_p$, and $Y$ is the union of all
  $Y_p$, (3) $C'(p) = M_p$ for $p \in \dom(n)$ and $C'(p) = C(p)$
  otherwise.
\end{definition}

\begin{restatable}{lemma}{alwaysTA}
\label{lem:ta-for-always-sync}
  Let $\Nn$ be an always-synchronizing negotiation. A location $(n,a)$
  is reachable in $\Nn$ iff there is a run in the timed automaton
  $\ta(\Nn)$ that executes an edge labeled with $(n, a)$.
\end{restatable}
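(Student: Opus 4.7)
The plan is to prove both directions of the equivalence by combining Lemma~\ref{lem:monotonic} with the observation that in an always-synchronizing negotiation every interaction forces the reference clocks of the participating agents to coincide, which is precisely what lets us ``collapse'' local delays into a single global delay.

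For the backward direction, suppose $\ta(\Nn)$ has a run firing $(n,a)$ with global delays $d_0, d_1, \dots, d_{m-1}$. I construct a negotiation run over the same action sequence by taking the local delay vector $\Delta_j(p) := d_j$ for every agent $p$. Since every reference clock advances by the same amount at every step, $v(t_p) = v(t_q)$ holds throughout, so the synchronization condition at each node (all of $N$, because $\Sync = N$) is trivially met. The TA-edge $C \xra{g, (n,a), Y} C'$ satisfies $g = \bigwedge_{p \in \dom(n)} g_p$ and $Y = \bigcup Y_p$, and its enabledness in $C$ is exactly the enabledness required in the negotiation. Hence each TA-step mirrors a small step in $\Nn$, and iterating produces a negotiation run that fires $(n,a)$.

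For the forward direction, suppose $(n,a)$ is reachable in $\Nn$. By Lemma~\ref{lem:monotonic} there is a monotonic run $\rho$ firing $(n,a)$ with nondecreasing timestamps $\theta_0 \le \theta_1 \le \cdots \le \theta_{m-1}$, which (by the WLOG clause in its proof) satisfies $\Delta_j(p) = 0$ whenever $p \notin \dom(\ell_j)$. I would build a TA-run over $\ell_0, \ell_1, \dots, \ell_{m-1}$ with global delays $d_j := \theta_j - \theta_{j-1}$ (taking $\theta_{-1} := 0$). The crux is checking that the conjoined guard $g = \bigwedge g_p$ holds in $\ta(\Nn)$ at each firing. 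The key claim is: for any $p \in \dom(\ell_j)$ and any $x \in X_p$, if $k$ is the index of the most recent reset of $x$ (necessarily a participation of $p$, since only $p$ can reset $X_p$), then at the instant $g_p$ is checked the value of $x$ equals $\theta_j - \theta_k$ in both models. In $\Nn$ this follows from a telescoping sum of the delays of $p$ at its successive participations between $k$ and $j$; in $\ta(\Nn)$ it is immediate since all clocks evolve uniformly. Thus each $g_p$ (and hence $g$) holds in the TA; resets, markings and enabledness transfer by construction.

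The main obstacle is the careful bookkeeping in the forward direction: formalising the telescoping identity $v_j(x) + \Delta_j(p) = \theta_j - \theta_k$ for local clocks along a monotonic run. It relies essentially on the WLOG clause of Lemma~\ref{lem:monotonic}, so that between two consecutive participations of $p$ the clock $x$ accumulates exactly $\theta_{k_{i+1}} - \theta_{k_i}$ and sums telescope to $\theta_j - \theta_k$. Once this identity is in place, matching the guards, resets, markings and enabledness between $\Nn$ and $\ta(\Nn)$ in both directions is essentially syntactic.
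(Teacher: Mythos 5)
Your proof is correct and follows essentially the same route as the paper's: the backward direction lifts the global delays to uniform local-delay vectors (which makes every synchronization condition trivially hold), and the forward direction applies Lemma~\ref{lem:monotonic} and converts the monotonic run's timestamps into global delays $\theta_j - \theta_{j-1}$. Your telescoping identity for local clocks is simply a more explicit justification of the paper's one-line observation that $(u_i + \delta_i)(x) = (v_i + \Delta_i(p))(x)$ for all $x \in X_p$ with $p \in \dom(\ell_i)$.
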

\begin{proof}
  Suppose $(n,a)$ is reachable in $\Nn$. By Lemma~\ref{lem:monotonic},
  there is a monotonic run
  $\rho:= (C_0, v_0) \xra{\Delta_0, \ell_0} \cdots \xra{\Delta_{m-1},
    \ell_{m-1}} (C_m, v_m)$ such that $\ell_{m-1} = (n, a)$. The run
  $\rho$ in itself is not a run of $\ta(\Nn)$ since in a timed
  automaton all clocks increase by the same amount, whereas here, the
  time delays are still local and asynchronous. However, we can 
  massage this run to make it a run of the timed automaton.

  We have $v_0$ as the initial valuation, which maps every clock to
  $0$. Let $\delta_0 = \theta^\rho_0$ and
  $\delta_i = \theta^\rho_i - \theta^\rho_{i-1}$ for $i \ge 1$. Due to
  the monotonicity assumption, we have $\delta_i \ge 0$. We let all
  agents elapse time $\delta_i$ at the $i^{th}$ step. We
  claim that the run
  $(C_0, u_0) \xra{\delta_0, \ell_0} (C_1, u_1) \xra{\delta_1, \ell_1}
  \cdots \xra{\delta_{m-1}, \ell_{m-1}} (C_m, u_m)$ with $u_0 = v_0$
  is a run of the timed automaton. This follows from the observation
  that $(u_i + \delta_i)(x) = (v_i + \Delta_i(p))(x)$ for all clocks
  $x \in X_p$ and all agents $p \in \dom(\ell_i)$. This proves the
  left-to-right direction.

  Suppose
  $(C_0, u_0) \xra{\delta_0, \ell_0} (C_1, u_1) \xra{\delta_1, \ell_1}
  \cdots \xra{\delta_{m-1}, \ell_{m-1}} (C_m, u_m)$ is a run of
  $\ta(\Nn)$ with $\ell_{m-1} = (n, a)$. Taking a local-delay
  $\Delta_i$ that maps every agent to $\delta_i$ gives us the same
  sequence as a run in $\Nn$, thereby proving the right-to-left
  direction. \qed

\end{proof}

\begin{theorem}
  Reachability is \PSPACE-complete for always-synchronizing
  negotiations.
\end{theorem}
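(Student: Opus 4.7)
The plan is to combine Lemma~\ref{lem:ta-for-always-sync} with the classical \PSPACE\ algorithm for timed automaton reachability. By Lemma~\ref{lem:ta-for-always-sync}, a location $\ell = (n,a)$ is reachable in $\Nn$ if and only if some run of the timed automaton $\ta(\Nn)$ fires an edge labeled $(n,a)$. So it suffices to show that edge-reachability in $\ta(\Nn)$ can be decided in \PSPACE\ in the size $K$ of the negotiation.

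For the upper bound I would analyse $\ta(\Nn)$: the states are markings, so there are exponentially many of them ($2^{\Oo(K)}$), but each marking admits a polynomial-size description as a vector of bit-strings, one per agent. The set of clocks of $\ta(\Nn)$ is exactly $X$, and the largest constant appearing in a guard is still $M$, so the region equivalence of $\ta(\Nn)$ has at most $\Oo(|X|! \cdot 2^{|X|} \cdot (2M+1)^{|X|}) = 2^{\Oo(K)}$ classes (each describable in polynomial space). Thus the region graph of $\ta(\Nn)$ has $2^{\Oo(K)}$ vertices, each of polynomial-size description, and its edges (given by the time-successor and discrete-successor relations) are checkable in polynomial time from the local descriptions of the two endpoints. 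A non-deterministic algorithm that guesses a path step by step, storing only the current (marking, region) pair, runs in \textsf{NPSPACE} and therefore in \PSPACE\ by Savitch's theorem. When the guessed transition is labeled $(n,a)$ we accept.

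For the lower bound I would reuse the same reduction as in the synchronization-free case: a timed automaton is a local-timed negotiation with a single agent $p$, with every node put in $\Sync$. Since $\dom(n) = \{p\}$ for every node, the synchronization condition $v(t_p) = v(t_q)$ for all $p,q \in \dom(n)$ is vacuously satisfied, so the always-synchronizing semantics coincides with standard timed automaton semantics. Hence \PSPACE-hardness transfers from timed automaton reachability.

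The only subtlety I would be careful about is that the region construction has to be performed on $\ta(\Nn)$ (whose clocks are all of $X$, merged across agents) rather than on each agent separately, because reference clocks and guards are intertwined through the timed automaton's uniform time elapse; but once Lemma~\ref{lem:ta-for-always-sync} is in hand this is immediate and no new ideas are needed beyond Alur--Dill's \PSPACE\ procedure. I would conclude with a one-line QED citing the upper and lower bounds. \qed
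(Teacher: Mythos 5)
Your proposal is correct and follows essentially the same route as the paper: reduce to edge-reachability in $\ta(\Nn)$ via Lemma~\ref{lem:ta-for-always-sync}, nondeterministically guess a path through the exponential-size but polynomially-describable region automaton of $\ta(\Nn)$ for the upper bound, and obtain hardness from single-agent timed automata where the synchronization condition is vacuous. The only differences are presentational (you invoke Savitch explicitly and spell out the Alur--Dill region count), not substantive.
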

\begin{proof}
  From Lemma~\ref{lem:ta-for-always-sync}, it is enough to check
  reachability of a certain edge in $\ta(\Nn)$. The idea is to
  non-deterministically guess a path in the region automaton of
  $\ta(\Nn)$.

  Let $K$ be the size of the negotiation $\Nn$ which
  includes the number of nodes, outcomes, clocks and the sum of the
  binary encodings of the constants present. 
  The number of states of $\ta(\Nn)$ is $2^{\Oo(K)}$. The set of
  clocks is the same as that of $\Nn$. Therefore, the number of
  regions for $\ta(\Nn)$ is still $2^{\Oo(K)}$. The product of states
  and regions remains to be $2^{\Oo(K)}$. Therefore the guesses path
  is of size bounded by $2^{\Oo(K)}$.  
  Moreover, each node of the region automaton can be represented in
  polynomial space. This gives the \PSPACE\ upper bound. 

  Lower bound follows once again from the hardness of timed
  automata, which is simply a negotiation with a single
  agent. Synchronization is vacuously true at every node. \qed
\end{proof}

\section{Reachability is undecidable for local-timed negotiations}
\label{sec:undecidable}

When we allow both synchronized and unsynchronized nodes, we are
unable to use either of the techniques of the previous two
sections. In fact, reachability
turns out to be undecidable. Since local-times are independent of each
other, it is possible to have an unbounded drift between the reference
clocks of two agents. This helps store counter values as differences
between the local times. The main challenge is the check for
zero. This is where we require a combination of synchronized and
unsynchronized interactions. We will now show to simulate a counter
machine using a local-timed negotiation.

\begin{theorem}\label{thm:undecidability}
  Reachability is undecidable for local-timed negotiations.
\end{theorem}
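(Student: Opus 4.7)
The plan is to reduce from the halting problem for two-counter Minsky machines, which is classically undecidable. Given such a machine $M$, I will build a local-timed negotiation $\Nn_M$ together with a distinguished ``halt'' location that is reachable iff $M$ halts on the empty input. My encoding uses a controller agent $c$ whose marking tracks the current instruction of $M$, together with a pair of agents $p_i, q_i$ per counter $i$. Between simulated instructions I maintain the invariant that the current value of counter $c_i$ equals the local-time drift $t_{p_i} - t_{q_i} \ge 0$; this is the key use of the local-time semantics, since it lets counter values grow arbitrarily while clocks remain real-valued. Each instruction is realized by a gadget that transitions the marking of $c$ and pins the local delays of all agents via guards of the form $x = k$ on just-reset local clocks. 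An \emph{increment} of $c_i$ is an unsynchronized node in which $p_i$ is forced to delay $1$ while every other agent delays $0$, increasing $t_{p_i} - t_{q_i}$ by one. A \emph{zero-test} of $c_i$ is a synchronizing node involving just $p_i$ and $q_i$, both clamped to zero delay; the synchronization condition reduces to $t_{p_i} = t_{q_i}$, which is exactly the predicate $c_i = 0$. The \emph{decrement} branch advances $q_i$ by one and then runs a positivity gadget that rejects any simulation in which the decrement has produced a negative drift.

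Two subtleties must be addressed. First, at a synchronizing small step, a non-participating agent still chooses its own local delay, which could silently corrupt the encoded counter. I block this by sandwiching every synchronizing node between two unsynchronized ``clamp'' nodes that reset a fresh local clock on every non-participant and check the same clock equal to zero on the other side, pinning each non-participant's delay across the sync to zero. Second, and this is the main obstacle, the synchronization primitive is naturally an equality test, whereas the positivity check requires the inequality $t_{p_i} \ge t_{q_i}$ and must not collapse the counter. I plan to implement this via a dummy agent $d_i$ per counter and a composition of two clamped syncs: a $q_i$-clamped sync that copies $t_{q_i}$ into $t_{d_i}$, followed by a $p_i$-clamped sync between $d_i$ and $p_i$, which can only fire when $t_{d_i} \le t_{p_i}$, that is, when the drift is non-negative. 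Because reference clocks are monotone, some bookkeeping is needed to keep $d_i$ properly ordered across repeated uses of the same check; I expect to handle this by re-aligning $d_i$ with $q_i$ through auxiliary clamped syncs whenever $q_i$ has grown past the previous value of $d_i$, or by using a small rotating pool of dummies per counter.

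With the gadgets in place, the reduction is established by two inductions. In the forward direction, every halting computation of $M$ lifts to a concrete run of $\Nn_M$ reaching the halt location by executing the gadgets in the order prescribed by $M$ and choosing the pinned delays that satisfy each guard. In the backward direction, every run of $\Nn_M$ reaching the halt location must satisfy all clamp, zero-test, and positivity constraints along the way, which, by an induction on gadget boundaries, forces the encoded drift values $t_{p_i} - t_{q_i}$ to evolve precisely as the counters of a genuine computation of $M$. Combining the two directions yields that the halt location is reachable in $\Nn_M$ iff $M$ halts, giving undecidability of reachability for local-timed negotiations.
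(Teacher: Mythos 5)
Your overall architecture coincides with the paper's: a reduction from two-counter machines, counter values stored as reference-clock drifts $t_{p_i}-t_{q_i}$, delays pinned by guards $x=k$ on freshly reset local clocks, and an auxiliary agent per counter to turn the equality-only synchronization primitive into an order test. The gap is in the one place you flag as ``bookkeeping'': your positivity gadget is not reusable, and neither of your proposed fixes repairs it. After one execution of the gadget, the second clamped sync forces $t_{d_i}$ up to the current value of $t_{p_i}$; since reference clocks are monotone and never reset, the precondition $t_{d_i}\le t_{q_i}$ of the \emph{first} clamped sync can then fail forever. Concretely, take the program that increments $C_1$ exactly $n$ times and then decrements it $n$ times: after the increments $t_{p_1}=n$ and $t_{q_1}=0$, and the $j$-th decrement (for $j<n$) runs the check with $t_{q_1}=j$ while every previously used dummy sits at $n>j$. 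Each such check therefore needs a dummy whose reference clock is still $\le j$, i.e.\ a fresh one, so a bounded rotating pool cannot suffice; and ``re-aligning $d_i$ with $q_i$'' is impossible precisely because $t_{q_i}$ never catches up to a stale dummy without advancing $q_i$ (which would corrupt the counter) or advancing $p_i$ in lockstep. As written, the faithful simulation deadlocks, so the forward direction of your reduction fails.

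The paper's construction resolves exactly this point, and the resolution is the heart of the proof rather than a detail. Its auxiliary agent $r_i$ is first synchronized to $p_i$ with $p_i$ clamped (legal because the invariant $t_{r_i}\le t_{q_i}\le t_{p_i}$ is maintained), freezing a copy of $t_{p_i}$; then, at an \emph{unsynchronized} node, $p_i$ and $q_i$ repeatedly advance by one unit each in lockstep---preserving the drift---until $q_i$ can synchronize with the frozen $r_i$. The number of loop iterations is forced to equal $c_i$, is measured on an auxiliary local clock $x'_{q_i}$ of $q_i$, and the zero test is the guard $x'_{q_i}=0$ versus $x'_{q_i}>0$. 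Afterwards $t_{r_i}$ equals the new $t_{q_i}$, so the ordering invariant is restored and the gadget can be reused indefinitely. To complete your proof you would need essentially this lockstep-advance idea (or an equivalent mechanism that restores $t_{d_i}\le t_{q_i}$ without disturbing the drift); the dummy-copying scheme alone does not work. A secondary, fixable point: with separate decrement and jump-on-zero instructions, the non-zero branch of jump-on-zero also needs a \emph{strict} positivity certificate, which your non-strict gadget only provides after a decrement.
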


The rest of the section is devoted to proving
Theorem~\ref{thm:undecidability}. We will encode the halting problem
of a $2$-counter machine as the reachability problem for a local-timed
negotiation.

\paragraph*{Counter machines.} A $2$-counter machine $M$ is a program that manipulates two counters, $C_1$ and $C_2$,
each of which can hold a non-negative number. The machine is given as
a sequence of labelled instructions $\ell:I$, where $I$ is one of the following for some $i \in \{1, 2\}$: 
\begin{description}
\item[increment] $C_i + +$, which increments the value of the
  counter $C_i$ and goes to the next instruction
  $\ell+1$.
\item[decrement] if $C_i>0$ then $C_i- -$, which decrements $C_i$ and continues with the next instruction
  $\ell+1$. If the value of $C_i$ is $0$, then the program is blocked. 
\item[jump-on-zero] if $C_i= = 0$ goto $\ell'$, which transfers control to the instruction labelled $\ell'$ if counter $C_i$ is $0$ for
  $i\in\{1,2\}$. If $C_i>0$, it continues to the instruction $\ell+1$.
\end{description}
The counter machine is said to halt if it reaches the final
instruction. A configuration of $M$ is a triple $(\ell, c_1, c_2)$ representing the current instruction $\ell$ that needs to be executed and the current values $c_1, c_2 \ge 0$ of the counters $C_1, C_2$ respectively. The transitions $(\ell, c_1, c_2) \xra{} (\ell', c'_1, c'_2)$ follow naturally from the description above.

\paragraph*{Overview of the reduction.} The negotiation $\Nn_M$ that we
construct will have $6$ agents $p_1, q_1, r_1, p_2, q_2, r_2$. Agents
$p_1, q_1, r_1$ simulate counter $C_1$, and the rest simulate
$C_2$. Let $i \in \{1, 2\}$. The local clocks of $p_i, q_i, r_i$ are
respectively $\{x_{p_i}\}$, $\{x_{q_i}, x'_{q_i}\}$ and $\{x_{r_i}\}$. Additionally, we have the reference clocks $t_\a$ for each agent $\a$. For every instruction $\ell$, we will have a node
$n_\ell$ in which all the six agents participate.
A configuration $(C, v)$ of $\Nn_M$ is said to encode configuration
$(\ell, c_1, c_2)$ of $M$ if:
\begin{itemize}
\item $C(\a) = \{n_\ell\}$ for every agent $\a$,
\item $v(x) = 0$ for all local clocks (and reference clocks can take any value),
\item $v(t_{r_1}) \le v(t_{q_1}) \le v(t_{p_1})$ and $v(t_{r_2}) \le v(t_{q_2}) \le v(t_{p_2})$, and 
\item $v(t_{p_1} - t_{q_1}) = c_1$ and $v(t_{p_2} - t_{q_2}) = c_2$,
\end{itemize}
The initial configuration of $\Nn_m$ has every agent in $n_{\ell_0}$, where $\ell_0$ is the first instruction in $M$ and every clock (including reference clocks) to be $0$.

We will have a gadget in $\Nn_M$ corresponding to each instruction in the counter machine. Let $(C, v), (C',v')$ be configurations that encode $(\ell, c_1, c_2)$ and $(\ell', c'_1, c'_2)$ respectively. A run $(C, v) \xra{} \cdots \xra{} (C', v')$ such that none of the intermediate configurations encodes any counter machine configuration will be called a \emph{big step}. We denote a big step as $(C, v) \Rightarrow (C', v')$. Our gadgets will ensure the following two properties.
\begin{itemize}
\item Let $(\ell, c_1, c_2) \xra{} (\ell', c'_1, c'_2)$ in $M$. Then from every configuration $(C, v)$ that encodes $(\ell, c_1, c_2)$, there is a big step $(C, v) \Rightarrow (C', v')$ to some configuration $(C', v')$ that encodes $(\ell', c'_1, c'_2)$.
\item Let $(C, v), (C', v')$ be arbitrary configurations of $\Nn_M$ that encode $(\ell, c_1, c_2)$ and $(\ell', c'_1, c'_2)$ respectively. If $(C, v) \Rightarrow (C', v')$ is a big step, then $(\ell, c_1, c_2) \xra{} (\ell', c'_1, c'_2)$ in $M$.
\end{itemize}
The first property ensures that for every path $(\ell^0, c^0_1, c^0_2)
\xra{} (\ell^1, c^1_1, c^1_2) \xra{} \cdots$, there is a sequence of
big steps $(C_0, v_0) \Rightarrow (C_1, v_1) \Rightarrow \cdots$ such
that  $(C_i, v_i)$ encodes $(\ell^i, c^i_1, c^i_2)$. The second
property ensures the reverse: from a sequence of big steps, we get
corresponding run in counter machine. We will now describe each gadget and show that the two properties are satisfied. 

\begin{figure}[t]
\begin{center}
\begin{tikzpicture}[scale=0.8, every node/.style={transform shape}]

\begin{scope}[line width=2pt]
\draw (0,0) -- (7.5,0);
\draw (0,-2.1) -- (7.5,-2.1);
\end{scope}

\filldraw [fill=white]  (0,0)	  circle (2.2mm) node (nip1) [label={left:$n_{\ell}$}] {$p_1$};
\filldraw [fill=white]  (1.5,0)	  circle (2.2mm) node (niq1) {$q_1$};
\filldraw [fill=white]  (3,0)	  circle (2.2mm) node (nir1) {$r_1$};
\filldraw [fill=white]  (4.5,0)	  circle (2.2mm) node (nip2) {$p_2$};
\filldraw [fill=white]  (6,0)	  circle (2.2mm) node (niq2) {$q_2$};
\filldraw [fill=white]  (7.5,0)	  circle (2.2mm) node (nir2) {$r_2$};
\filldraw [fill=white]  (0,-2.1)	  circle (1.4mm) node (n2p1) [label={left:$n_{\ell + 1}$}] {};
\filldraw [fill=white]  (1.5,-2.1)	  circle (1.4mm) node (n2q1) {};
\filldraw [fill=white]  (3,-2.1)	  circle (1.4mm) node (n2r1) {};
\filldraw [fill=white]  (4.5,-2.1)	  circle (1.4mm) node (n2p2) {};
\filldraw [fill=white]  (6,-2.1)	  circle (1.4mm) node (n2q2) {};
\filldraw [fill=white]  (7.5,-2.1)	  circle (1.4mm) node (n2r2) {};

\begin{scope}[thick,->]
  \draw (nip1) -- node [very near start, right] {$a$} (n2p1)	 node [midway,right, text width=1.25cm] {$x_{p_1}=1$ $\{x_{p_1}\}$};
  \draw (niq1) -- node [very near start, right] {$a$} (n2q1)	 node [midway,right, text width=1.25cm] {$x_{q_1}=0$};
  \draw (nir1) -- node [very near start, right] {$a$} (n2r1)	 node [midway,right, text width=1.25cm] {$x_{r_1}=0$};
  \draw (nip2) -- node [very near start, right] {$a$} (n2p2)	 node [midway,right, text width=1.25cm] {$x_{p_2}=0$};
  \draw (niq2) -- node [very near start, right] {$a$} (n2q2)	 node [midway,right, text width=1.25cm] {$x_{q_2}=0$};
  \draw (nir2) -- node [very near start, right] {$a$} (n2r2)	 node [midway,right, text width=1.25cm] {$x_{r_2}=0$};
\end{scope}

\draw [dashed, thick] (-1.2,0.5) -- (9,0.5) -- (9,-1.8) -- (-1.2,-1.8) -- (-1.2,0.5);

\node at (4,.75) {Increment Gadget for $c_1$};
\end{tikzpicture}
\caption{Gadget for implementing increment instruction on $c_1$}
\label{fig:increment}
\end{center}
\end{figure}
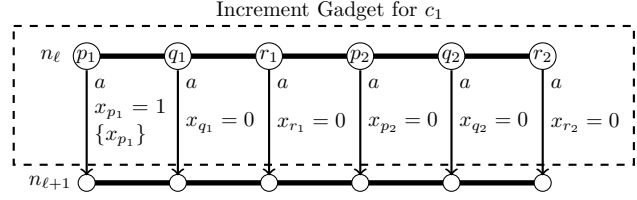

\paragraph*{Increment.} Assume an instruction $\ell: C_1 ++$ with $\ell$ not being the final instruction. The case with $C_2++$ is symmetric. Every configuration $(\ell, c_1, c_2)$ on executing this instruction goes to $(\ell + 1, c_1 + 1, c_2)$. Figure~\ref{fig:increment} shows the gadget for the increment instruction. All agents other than $p_1$ cannot elapse time due to the guard checking local clocks to $0$. Agent $p_1$ elapses
exactly one time unit, after which clock $x_{p_1}$ is reset.
It is easy to see that the configuration $(C', v')$ that results from $(C, v)$
encodes $(\ell', c_1 + 1, c_2)$. The big step $(C, v) \Rightarrow (C',v')$ is in fact a single transition. Both the properties are easily seen to be satisfied. 

\begin{figure}[t]
\centering
\begin{tikzpicture}[scale=0.8, every node/.style={transform shape}]

\begin{scope}[line width=2pt]
\draw (0,0) -- (2.2*\dx+0,0);
\draw [loosely dotted] (2.2*\dx+0,0) -- (2.5*\dx+0,0);
\draw [densely dotted, opacity=0.4] (0,-1*\dx) -- (2*\dx+0,-1*\dx);
\draw (0,-2*\dx) -- (\dx+.3,-2*\dx);
\draw [densely dotted, opacity=0.4] (\dx-0,-3*\dx) -- (2*\dx+0,-3*\dx);
\draw (0,-4*\dx) -- (2*\dx,-4*\dx);
\draw (0,-5*\dx) -- (\dx,-5*\dx);
\draw (-2.6*\dx,-4.8*\dx) -- (-2.1*\dx,-4.8*\dx);
\draw (-2.6*\dx,-5.3*\dx) -- (-2.1*\dx,-5.3*\dx);
\draw (-2.1*\dx,-4.8*\dx)  -- (-1.9*\dx,-4.8*\dx);
\draw [loosely dotted] (-1.9*\dx,-4.8*\dx)  -- (-1.6*\dx,-4.8*\dx);
\draw (-2.1*\dx,-5.3*\dx)  -- (-1.9*\dx,-5.3*\dx);
\draw [loosely dotted] (-1.9*\dx,-5.3*\dx)  -- (-1.6*\dx,-5.3*\dx);
\draw (2.35*\dx,-5*\dx) -- (2.5*\dx,-5*\dx);
\draw [loosely dotted] (2.2*\dx,-5*\dx) -- (2.35*\dx,-5*\dx);
\end{scope}

\filldraw [fill=white]  (0,0)    circle (2.2mm) node (nip1) [label={left:$n_{\ell}$}] {$p_1$};
\filldraw [fill=white]  (\dx,0)  circle (2.2mm) node (niq1) {$q_1$};
\filldraw [fill=white]  (2*\dx,0) circle (2.2mm) node (nir1) {$r_1$};

\filldraw [fill=white] (0,-\dx)			circle	 (1.4mm)	 node	 (z1p1)	{};
\filldraw [fill=white] (2*\dx,-\dx)		circle	 (1.4mm)	 node	 (z1r1) [label={right:$\,\,\textcolor{black}{n_1}$}]	{};

\filldraw [fill=white] (0,-2*\dx)		circle	 (1.4mm)	 node	 (z2p1)	{};
\filldraw [fill=white] (\dx,-2*\dx)		circle	 (1.4mm)	 node	 (z2q1) [label={right:$\,\,n_2$}]	{};

\filldraw [fill=white] (\dx,-3*\dx)		circle	 (1.4mm)	 node	 (z3q1)	{};
\filldraw [fill=white] (2*\dx,-3*\dx)	circle	 (1.4mm)	 node	 (z3r1) [label={right:$\,\,\textcolor{black}{n_3}$}]	{};

\filldraw [fill=white] (0,-4*\dx)		circle	 (1.4mm)	 node	 (z4p1)	{};
\filldraw [fill=white] (\dx,-4*\dx)		circle	 (1.4mm)	 node	 (z4q1)	{};
\filldraw [fill=white] (2*\dx,-4*\dx)	circle	 (1.4mm)	 node	 (z4r1) [label={right:$n_4$}]	{};

\filldraw [fill=white] (0,-5*\dx)		circle	 (1.4mm)	 node	 (dp1)	{};
\filldraw [fill=white] (\dx,-5*\dx)		circle	 (1.4mm)	 node	 (dq1)  [label={right:$n_6$}]	{};

\filldraw [fill=white] (-2.6*\dx,-4.8*\dx) circle (1.4mm) node (z5p1) [label={left:$n_5$}]	{};
\filldraw [fill=white] (-2.1*\dx,-4.8*\dx)   circle (1.4mm) node (z5q1)	{};
\filldraw [fill=white] (-2.6*\dx,-5.3*\dx)   circle (1.4mm) node (nfp1) [label={left:$n_{fin}$}]	{};
\filldraw [fill=white] (-2.1*\dx,-5.3*\dx)     circle (1.4mm) node (nfq1)	{};

\filldraw [fill=white] (2.5*\dx,-5*\dx)	circle (1.4mm)  node (n5r1) [label={right:$n_5$}] {};

\begin{scope}[thick,->]
  \draw (nip1) -- (z1p1) node [midway, right, text width=1.25cm] {$st,$ $x_{p_1}=0$};
  \draw (z1p1) -- (z2p1) node [midway, right, text width=1.25cm] {$x_{p_1}=0$};
  \draw (z2p1) -- (z4p1) node [midway, left, text width = 1.25cm] {$a,$ $x_{p_1} = 0$};
  \draw (z4p1) -- (dp1)  node [midway, right, text width=1.25cm] {$x_{p_1}=0$};
  \draw (dp1)  -- (0,-6*\dx) node [midway, right, text width=1.25cm] {$x_{p_1}=0,$ $\{x_{p_1}\}$};
  \draw (z4p1) -- (-2.58*\dx,-4.75*\dx);
  \draw (z5p1) -- node [midway, right] {} (nfp1) node [midway, left] {$x_{p_1}\!\!<\!0$};
  \draw (z2p1) .. controls (-1.1,-2*\dx-1.1) and (-1.1,-2*\dx+1.1) .. (z2p1) node [midway, left=-0.4cm, text width=2cm] {$b,$ $x_{p_1}=1,$ $\{x_{p_1}\}$};
  
  \draw (\dx,-\dx-.2) -- (z2q1) node [midway, right] {};
  \draw (z2q1) -- node [midway, left] {$x_{q_1} = 0,$} (z3q1) node [midway, right] {$a$};
  \draw (z3q1) -- (z4q1) node [midway, right, text width=1.25cm] {$c,$\\ $x_{q_1}\!\!=\!0$};
  \draw (z4q1) --  (dq1) node [midway, right, text width=1.25cm] {$x_{q_1} = 0,$ $x_{q_1}'\!\!>\!0,$ $\{x_{q_1}'\}$};
  \draw (dq1)  -- (\dx,-6*\dx) node [midway, right, text width=1.25cm] {$x_{q_1}=1,$ $\{x_{q_1}\}$};
  \draw (z4q1) -- (z5q1) node [midway, below, sloped] {$x_{q_1}'=0$};
  \draw (z5q1) -- (nfq1);
  \draw (z2q1) .. controls (\dx+1.1,-2*\dx-1.1) and (\dx+1.1,-2*\dx+1.1) .. node [midway, right=-0.11cm, text width=1.25cm] {$b,$\\ $x_{q_1}\!\!=\!1,$ $\{x_{q_1}\}$} (z2q1);
  
  \draw (nir1) -- (z1r1) node [midway, right, text width=1cm] {$st,$ $x_{r_1}\!\!=\!0$};
  \draw (z1r1) -- (z3r1) node [midway, right] {$\{x_{r_1}\}$};
  \draw (z3r1) -- (z4r1) node [midway, right, text width=1.25cm] {$c,$\\ $x_{r_1}\!\!=\!0$};
  \draw (z4r1) -- (2*\dx,-6*\dx);
  \draw (z4r1) -- (n5r1);
\end{scope}

\draw [thick] (niq1) -- (\dx,-\dx+.2) node [midway, right, text width=1.25cm] {$st,$ $x_{q_1} = 0,$ $x_{q_1}'=0$};
\draw [thick] (\dx,-\dx+.2) .. controls (\dx+.1,-\dx+.2-.1) and(\dx+.1,-\dx-.2+.1) .. (\dx,-\dx-.2);

\draw [dashed, thick] (-1.5*\dx+.2,1) -- (3*\dx,1) -- (3*\dx,-4.8*\dx) -- (-1.5*\dx+.2,-4.8*\dx) -- (-1.5*\dx+.2,1) node [midway, sloped, above] {Zero-check gadget for $c_1$};

\draw [dashed, thick] (-.2*\dx,-4.85*\dx) -- (1.8*\dx,-4.85*\dx) -- (1.8*\dx,-5.85*\dx) -- (-.2*\dx,-5.85*\dx) -- (-.2*\dx,-4.85*\dx) node [midway, left, text width = 1.6cm] {Decrement gadget\\ for $c_1$};

\draw [dashed, thick] (-3.3*\dx,-4.6*\dx) -- (-1.5*\dx,-4.6*\dx) -- (-1.5*\dx,-5.5*\dx) -- (-3.3*\dx,-5.5*\dx) -- (-3.3*\dx,-4.6*\dx);

\draw [dashed, thick] (-.2*\dx,-6.05*\dx) -- (2.5*\dx,-6.05*\dx) -- (2.5*\dx,-6.75*\dx) -- (-.2*\dx,-6.75*\dx) -- (-.2*\dx,-6.05*\dx);

\node at (1.15*\dx,-6.4*\dx) {Gadget for implementing $\ell+1$};
\node at (-2.6*\dx,-4.4*\dx) {Block gadget for $c_1$};

\end{tikzpicture}
\caption{Gadget for implementing decrement instruction on $c_1$ }
\label{fig:jump-decrement}
\end{figure}
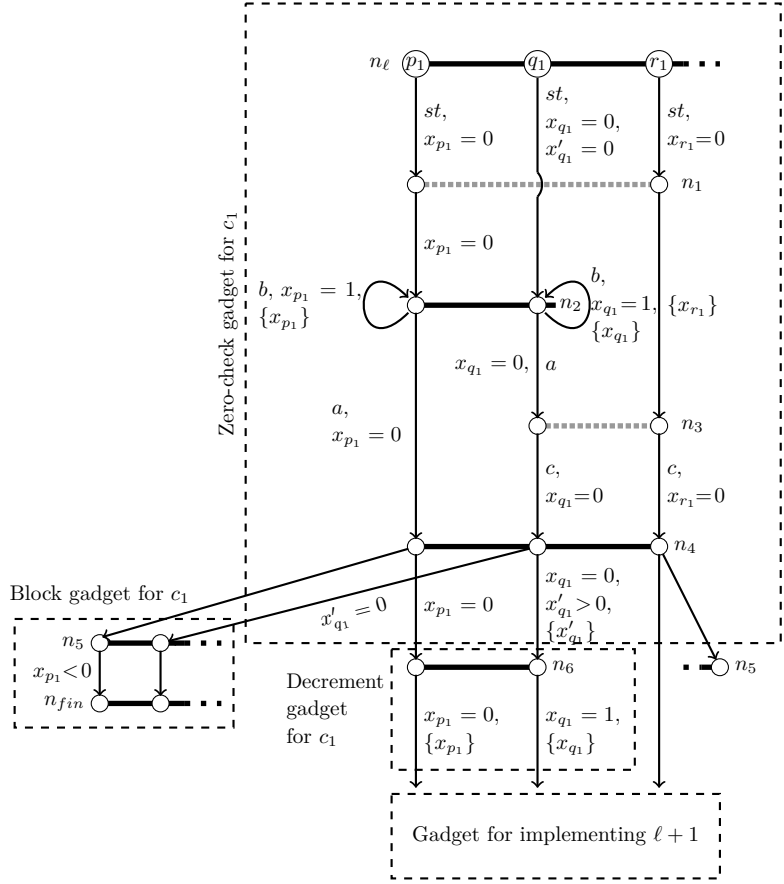

\paragraph*{Decrement.} Consider a decrement instruction: if $C_1 > 0$
then $C_1 --$. We have
$(\ell, c_1, c_2) \xra{} (\ell+1, c_1 - 1, c_2)$ if $c_1 > 0$. The
first task is to check if $c_1 > 0$. Recall that in the negotiation the difference $t_{p_1} - t_{q_1}$ gives the value of $c_1$. Our idea is to let $q_1$ elapse time to
synchronize with $p_1$ and check if this time elapse needed for
synchronization is strictly above $0$ or not. However, in this
process, we lose the actual value of the counter. In order to maintain the same difference between $t_{p_1}$ and $t_{q_1}$, we make use of the auxiliary process $r_1$.

Consider a configuration $(C, v)$ that encodes $(\ell, c_1, c_2)$. By our definition, $v(t_{r_1}) \le v(t_{p_1})$ and $v(t_{p_1} - t_{q_1}) = c_1$. 
\begin{itemize}
 
\item We first let $r_1$ synchronize with $p_1$, while $p_1$
  elapses no time.
\item Next, we keep moving both $p_1$ and $q_1$ by $1$ unit each until
  $q_1$ synchronizes with $r_1$. In this entire process $r_1$ is not
  allowed to elapse time.
\end{itemize}
By the end of this, we will get a valuation $v'$ with the same difference $v'(t_{p_1} - t_{q_1}) = c_1$
since both the agents were moved by the same amount. Moreover, we can
use an additional clock to check whether in the process of $q_1$
synchronizing with $r_1$, a non-zero time had elapsed in $q_1$.

The gadget is depicted in Figure~\ref{fig:jump-decrement}. For simplicity, we do not index the intermediate nodes $n_1, n_2, n_3, n_4$ by $\ell$. The computation proceeds in three phases. Below, we show the run of $\Nn_M$ along this gadget, restricted to the agents $p_1, q_1, r_1$. The other three agents simply move to the next possible instruction (this is not shown in the figure for simplicity). 

\paragraph*{Phase 1.} Synchronize $r_1$ with $p_1$ maintaining no time
elapse in $p_1$ as follows: $((n_\ell, n_\ell, n_\ell), v) \xra{}
((n_1, n_2, n_1), v_1) \xra{} ((n_2, n_2, n_3), v_3)$. After the last
action, agents $p_1$ and $r_1$ are synchronized, that is,
$v_3(t_{p_1}) = v_3(t_{r_1})$. Moreover, $v_3(t_{p_1}) = v(t_{p_1})$.

\paragraph*{Phase 2.} Move $p_1$ and $q_1$ repeatedly by one unit each:  $((n_2, n_2, n_3), v_3) \xra{b} ((n_2, n_2, n_3), v_3^1) \xra{b} \cdots \xra{b} ((n_2, n_2, n_3), v_3^k)$. By the end of $k$ iterations of $b$, we have $v^k_3(t_{p_1}) = v(t_{p_1}) + k$ and $v^k_3(t_{q_1}) = v(t_{q_1}) + k$.

\paragraph*{Phase 3.} Check if the reference clocks of $q_1$ and $r_1$
are equal: $((n_2, n_2, n_3), v_3^k)$
$\xra{a} ((n_4, n_3, n_3), v_4) \xra{c} ((n_4, n_4, n_4), v_5)$. The
outcome $c$ at $n_4$ can be fired only if the reference clocks of
$q_1$ and $r_1$ are equal: that is, $v_4(t_{q_1}) = v_4(t_{r_1})$. Due
to the guard checking for $0$ at $a$ and $c$, we have
$v_5(t_{q_1}) = v_4(t_{q_1}) = v_3^k(t_{q_1})$. From Phase 2, this
value equals $v(t_{q_1}) + k$. Secondly, notice that
$v_4(t_{r_1}) = v_3(t_{r_1})$, which from Phase 1 equals
$v(t_{p_1})$. From the equality $v_4(t_{q_1}) = v_4(t_{r_1})$, we get
$v(t_{q_1}) + k = v(t_{p_1})$.  This shows that
$k = v(t_{p_1} - v_{t_{q_1}})$. The number of times the loop $b$ is
done equals the difference between $t_{p_1}$ and $t_{q_1}$ at the
start of this gadget. If the number of $b$ iterations is more than
this value of less than this value, the
the negotiation cannot proceed further.

When action $c$ is done, the clock $x'_{q_1}$ holds the time between
$st$ and $c$ for agent $q_1$, and this is exactly $k$. If $k > 0$ the
value of $t_{q_1}$ is increased by $1$, resulting in the counter value
getting decremented and the agents move to the next instruction (shown
as the decrement gadget in the figure). Otherwise, the agents are sent
to a gadget from which there is no run (shown as the block gadget in
the figure).
Notice the interplay between synchronized and unsynchronized nodes in
this gadget. It is crucial that $n_2$ is unsynchronized, whereas
nodes $n_1, n_3$ need to be.

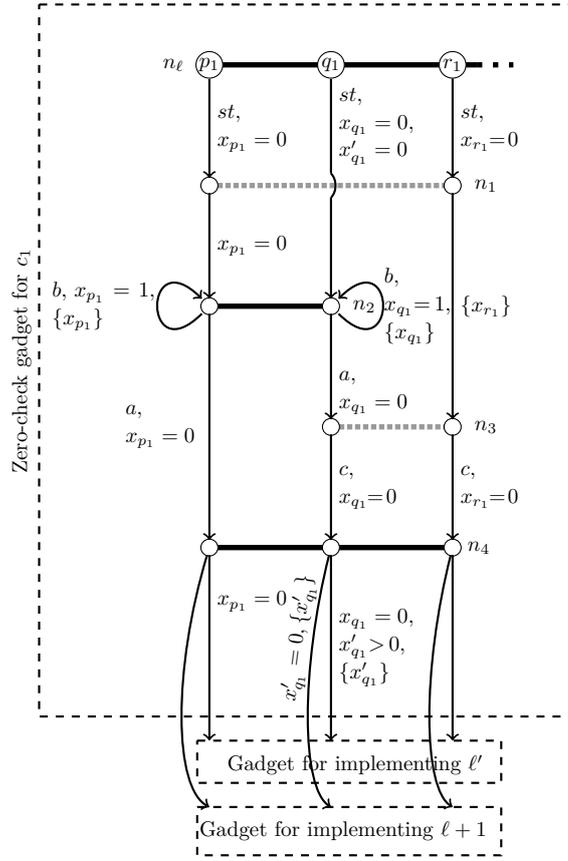
\begin{figure}[t]
\centering
\begin{tikzpicture}[scale=0.8, every node/.style={transform shape}]

\begin{scope}[shift={(10,0)}]

\begin{scope}[line width=2pt]
\draw (0,0) -- (2.2*\dx,0);
\draw [loosely dotted] (2.2*\dx,0) -- (2.5*\dx,0);
\draw [densely dotted, opacity=0.4] (0,-1*\dx) -- (2*\dx+0,-1*\dx);
\draw (0,-2*\dx) -- (\dx+0,-2*\dx);
\draw [densely dotted, opacity=0.4] (\dx-0,-3*\dx) -- (2*\dx+0,-3*\dx);
\draw (0,-4*\dx) -- (2*\dx,-4*\dx);
\end{scope}

\filldraw [fill=white]  (0,0)    circle (2.2mm) node (nip1) [label={left:$n_{\ell}$}] {$p_1$};
\filldraw [fill=white]  (\dx,0)  circle (2.2mm) node (niq1) {$q_1$};
\filldraw [fill=white]  (2*\dx,0) circle (2.2mm) node (nir1) {$r_1$};

\filldraw [fill=white] (0,-\dx)			circle	 (1.4mm)	 node	 (z1p1)	{};
\filldraw [fill=white] (2*\dx,-\dx)		circle	 (1.4mm)	 node	 (z1r1) [label={right:$\,\,\textcolor{black}{n_1}$}]	{};

\filldraw [fill=white] (0,-2*\dx)		circle	 (1.4mm)	 node	 (z2p1)	{};
\filldraw [fill=white] (\dx,-2*\dx)		circle	 (1.4mm)	 node	 (z2q1) [label={right:$\,\,n_2$}]	{};

\filldraw [fill=white] (\dx,-3*\dx)		circle	 (1.4mm)	 node	 (z3q1)	{};
\filldraw [fill=white] (2*\dx,-3*\dx)	circle	 (1.4mm)	 node	 (z3r1) [label={right:$\,\,\textcolor{black}{n_3}$}]	{};

\filldraw [fill=white] (0,-4*\dx)		circle	 (1.4mm)	 node	 (z4p1)	{};
\filldraw [fill=white] (\dx,-4*\dx)		circle	 (1.4mm)	 node	 (z4q1)	{};
\filldraw [fill=white] (2*\dx,-4*\dx)	circle	 (1.4mm)	 node	 (z4r1) [label={right:$n_4$}]	{};

\begin{scope}[thick,->]  
  \draw (nip1) -- (z1p1) node [midway, right, text width=1.25cm] {$st,$\\ $x_{p_1} = 0$};
  \draw (z1p1) -- (z2p1) node [midway, right, text width=1.25cm] {$x_{p_1}=0$};
  \draw (z2p1) -- (z4p1) node [midway, left, text width=1.25cm] {$a,$ $x_{p_1} = 0$};
  \draw (z4p1) -- (0*\dx,-5.6*\dx)  node [near start, right, text width=1.25cm] {$x_{p_1}=0$};
  \draw (z2p1) .. controls (-1.1,-2*\dx-1.1) and (-1.1,-2*\dx+1.1) .. node [midway, left=-0.4cm, text width=2cm] {$b,$ $x_{p_1}=1,$ $\{x_{p_1}\}$} (z2p1);
  
  \draw (\dx,-\dx-.2) -- (z2q1) node [midway, right] {};
  \draw (z2q1) -- node [near end, right, text width=1.25cm] {$a,$\\ $x_{q_1} = 0$} (z3q1);
  \draw (z3q1) -- (z4q1) node [midway, right, text width=1.25cm] {$c,$\\ $x_{q_1}\!\!=\!0$};
  \draw (z4q1) --  node [near start, right] {} (1*\dx,-5.6*\dx) node [midway=0.3cm, right, text width=1.25cm] {$x_{q_1} = 0,$ $x_{q_1}'\!\!>\!0,$ $\{x_{q_1}'\}$};
  \draw (z2q1) .. controls (\dx+1.1,-2*\dx-1.1) and (\dx+1.1,-2*\dx+1.1) .. node [midway, right=-0.11cm, text width=1.25cm] {$b,$\\ $x_{q_1}\!\!=\!1,$ $\{x_{q_1}\}$} (z2q1);
  
  \draw (nir1) -- (z1r1) node [midway, right, text width=1.25cm] {$st,$\\ $x_{r_1}\!\!=\!0$};
  \draw (z1r1) -- (z3r1) node [midway, right, text width=1.25cm] {$\{x_{r_1}\}$};
  \draw (z3r1) -- (z4r1) node [midway, right, text width=1cm] {$c,$ $x_{r_1}\!\!=\!0$};
  \draw (z4r1) -- (2*\dx,-5.6*\dx) node [near start, right] {};

 \draw (z4p1) .. controls (-.6,-10) and (-.6,-12) .. node [very near start, sloped, above] {} (0*\dx,-6.15*\dx) {};

 \draw (z4q1) .. controls (1.5,-10) and (1.5,-12) .. node [near start, sloped, above=-0.1cm] {$x'_{q_1} = 0, \{x_{q_1}'\}$} (1*\dx,-6.15*\dx) {};

 \draw (z4r1) .. controls (3.5,-10) and (3.5,-12) .. node [very near start, sloped, above] {} (2*\dx,-6.15*\dx);

\end{scope}

\draw [thick] (niq1) -- (\dx,-\dx+.2) node [midway, right, text width=1.25cm] {$st,$ $x_{q_1} = 0,$ $x_{q_1}'=0$};
\draw [thick] (\dx,-\dx+.2) .. controls (\dx+.1,-\dx+.2-.1) and(\dx+.1,-\dx-.2+.1) .. (\dx,-\dx-.2);

\draw [dashed, thick] (-1.5*\dx+.2,1) -- (3*\dx,1) -- (3*\dx,-5.4*\dx) -- (-1.5*\dx+.2,-5.4*\dx) -- (-1.5*\dx+.2,1) node [midway, sloped, above] {Zero-check gadget for $c_1$};

\draw [dashed, thick] (-.1*\dx,-5.6*\dx) -- (2.4*\dx,-5.6*\dx) -- (2.4*\dx,-5.95*\dx) -- (-.1*\dx,-5.95*\dx) -- (-.1*\dx,-5.6*\dx);

\draw [dashed, thick] (-.1*\dx,-6.15*\dx) -- (2.4*\dx,-6.15*\dx) -- (2.4*\dx,-6.55*\dx) -- (-.1*\dx,-6.55*\dx) -- (-.1*\dx,-6.15*\dx);

\node at (1.2*\dx,-5.8*\dx) {Gadget for implementing $\ell'$};
\node at (1.1*\dx,-6.35*\dx) {Gadget for implementing $\ell +1$};
\end{scope}
\end{tikzpicture}
\caption{Gadget for implementing jump-on-zero instruction on $c_1$ }
\label{fig:jump}
\end{figure}

\paragraph*{Jump-on-zero.} This gadget is similar to the decrement
gadget, where the first part was to check if $C_1$ is $0$ or not. When $C_1 == 0$, the gadget jumps to the
relevant instruction, otherwise it moves to the next instruction in
sequence. The gadget is shown in Figure~\ref{fig:jump}.

\section{Conclusion}
\label{sec:conclusion}

We have presented a model of local-timed negotiations. This is
motivated by the need for expressing timing constraints between
interactions in a negotiation. We have chosen a local-time model and
incorporated a synchronization constraint as part of the model. We
have shown that reachability is decidable when there is no mix of
synchronized and unsynchronized interactions. This mix creates
situations where one agent needs to fire a number of outcomes before
synchronizing with a second agent, and in this process forces a third
agent to elapse time. We have used this in the gadget explained for
the decrement instruction. As future work, we would like to study
non-trivial restrictions which contain a mix of synchronized and
unsynchronized interactions and are yet decidable.

We would like to remark that such a synchronization constraint can be
added in the local-time semantics for networks of timed
automata. Currently, the local-time semantics forces every shared
action to be synchronized. The main reason is that the definition
gives equivalence with the global-time semantics for reachability and
B\"uchi reachability. For networks of timed automata, global-time
semantics is considered the gold standard. The local-time semantics is
studied as a heuristic to solve reachability and B\"uchi reachability,
since this has better independence properties and is therefore
amenable to partial-order methods.  In our case with negotiations, we
consider local-time as the original semantics and make synchronization
as an option to be specified in the model. This allows more
independence between the agents and makes it more attractive for
partial-order methods. Having a decidable fragment with controlled use
of synchronization would be interesting in this regard.

\bibliographystyle{plain}
\bibliography{timednegotiation}

\begin{thebibliography}{10}

\bibitem{DBLP:conf/fossacs/AkshayGHM20}
S.~Akshay, Blaise Genest, Lo{\"{\i}}c H{\'{e}}lou{\"{e}}t, and Sharvik Mital.
\newblock Timed negotiations.
\newblock In {\em Foundations of Software Science and Computation Structures -
  23rd International Conference, {FOSSACS} 2020, Held as Part of the European
  Joint Conferences on Theory and Practice of Software, {ETAPS} 2020, Dublin,
  Ireland, April 25-30, 2020, Proceedings}, pages 37--56, 2020.

\bibitem{DBLP:conf/concur/AkshayMK07}
S.~Akshay, Madhavan Mukund, and K.~Narayan Kumar.
\newblock Checking coverage for infinite collections of timed scenarios.
\newblock In {\em {CONCUR} 2007 - Concurrency Theory, 18th International
  Conference, {CONCUR} 2007, Lisbon, Portugal, September 3-8, 2007,
  Proceedings}, pages 181--196, 2007.

\bibitem{DBLP:journals/tcs/AlurD94}
Rajeev Alur and David~L. Dill.
\newblock A theory of timed automata.
\newblock {\em Theor. Comput. Sci.}, 126(2):183--235, 1994.

\bibitem{DBLP:conf/concur/BengtssonJLY98}
Johan Bengtsson, Bengt Jonsson, Johan Lilius, and Wang Yi.
\newblock Partial order reductions for timed systems.
\newblock In {\em {CONCUR} '98: Concurrency Theory, 9th International
  Conference, Nice, France, September 8-11, 1998, Proceedings}, pages 485--500,
  1998.

\bibitem{DBLP:journals/acta/DeselEH19}
J{\"{o}}rg Desel, Javier Esparza, and Philipp Hoffmann.
\newblock Negotiation as concurrency primitive.
\newblock {\em Acta Informatica}, 56(2):93--159, 2019.

\bibitem{DBLP:conf/concur/EsparzaD13}
Javier Esparza and J{\"{o}}rg Desel.
\newblock On negotiation as concurrency primitive.
\newblock In {\em {CONCUR} 2013 - Concurrency Theory - 24th International
  Conference, {CONCUR} 2013, Buenos Aires, Argentina, August 27-30, 2013.
  Proceedings}, pages 440--454, 2013.

\bibitem{DBLP:journals/lmcs/EsparzaKMW18}
Javier Esparza, Denis Kuperberg, Anca Muscholl, and Igor Walukiewicz.
\newblock Soundness in negotiations.
\newblock {\em Log. Methods Comput. Sci.}, 14(1), 2018.

\bibitem{DBLP:conf/lics/EsparzaMW17}
Javier Esparza, Anca Muscholl, and Igor Walukiewicz.
\newblock Static analysis of deterministic negotiations.
\newblock In {\em 32nd Annual {ACM/IEEE} Symposium on Logic in Computer
  Science, {LICS} 2017, Reykjavik, Iceland, June 20-23, 2017}, pages 1--12,
  2017.

\bibitem{DBLP:conf/concur/GovindHSW19}
R.~Govind, Fr{\'{e}}d{\'{e}}ric Herbreteau, B.~Srivathsan, and Igor
  Walukiewicz.
\newblock Revisiting local time semantics for networks of timed automata.
\newblock In {\em 30th International Conference on Concurrency Theory, {CONCUR}
  2019, August 27-30, 2019, Amsterdam, the Netherlands}, pages 16:1--16:15,
  2019.

\bibitem{DBLP:conf/lics/0001HSW22}
R.~Govind, Fr{\'{e}}d{\'{e}}ric Herbreteau, B.~Srivathsan, and Igor
  Walukiewicz.
\newblock Abstractions for the local-time semantics of timed automata: a
  foundation for partial-order methods.
\newblock In {\em {LICS} '22: 37th Annual {ACM/IEEE} Symposium on Logic in
  Computer Science, Haifa, Israel, August 2 - 5, 2022}, pages 24:1--24:14,
  2022.

\bibitem{DBLP:conf/apn/Hanisch93}
Hans{-}Michael Hanisch.
\newblock Analysis of place/transition nets with timed arcs and its application
  to batch process control.
\newblock In {\em Application and Theory of Petri Nets 1993, 14th International
  Conference, Chicago, Illinois, USA, June 21-25, 1993, Proceedings}, pages
  282--299, 1993.

\bibitem{DBLP:conf/concur/HerbreteauSW22}
Fr{\'{e}}d{\'{e}}ric Herbreteau, B.~Srivathsan, and Igor Walukiewicz.
\newblock Checking timed b{\"{u}}chi automata emptiness using the local-time
  semantics.
\newblock In {\em 33rd International Conference on Concurrency Theory, {CONCUR}
  2022, September 12-16, 2022, Warsaw, Poland}, pages 12:1--12:24, 2022.

\bibitem{DBLP:journals/tcs/LugiezNZ05}
Denis Lugiez, Peter Niebert, and Sarah Zennou.
\newblock A partial order semantics approach to the clock explosion problem of
  timed automata.
\newblock {\em Theor. Comput. Sci.}, 345(1):27--59, 2005.

\bibitem{10.5555/907383}
Philip~Meir Merlin.
\newblock {\em A Study of the Recoverability of Computing Systems.}
\newblock PhD thesis, 1974.
\newblock AAI7511026.

\end{thebibliography}

\end{document}